\DeclarePairedDelimiter\floor{\lfloor}{\rfloor}
\newtheorem{theorem}{Theorem}
\newtheorem{proposition}{Proposition}
\newtheorem{lemma}[theorem]{Lemma}
\newenvironment{proof}[1][Proof]{\noindent\textbf{#1.} }{\hfill$\square$}
\newtheorem{definition}{Definition}[section]
\newcolumntype{L}[1]{>{\raggedright\let\newline\\arraybackslash\hspace{0pt}}m{#1}}
\newcolumntype{C}[1]{>{\centering\let\newline\\arraybackslash\hspace{0pt}}m{#1}}
\newcolumntype{R}[1]{>{\raggedleft\let\newline\\arraybackslash\hspace{0pt}}m{#1}}
\newenvironment{myquote}%
  {\list{}{\leftmargin=0.3in\rightmargin=0in}\item[]}%
  {\endlist}
\begin{document}

\begin{titlepage}
\title{\vspace{-3em}Going... going... wrong: a test of the level-\textit{k} (and cognitive hierarchy) models of bidding behaviour\thanks{I would especially like to thank my supervisor Miguel Ballester for his insightful suggestions and constant support. For useful comments and discussions, I would like to thank Johannes Abeler, Jesper Åkesson, Elizabeth Baldwin, Adam Brzezinski, Riccardo Camboni, Vince Crawford, Jasmin Droege, Caspar Jacobs, Nat Levine, Luke Milsom, Frieder Neunhoeffer, Rossa O'Keeffe-O'Donovan, Bernhard Kasberger, Paul Klemperer, Alex Teytelboym and Jasmine Theilgaard. I would also like to thank audiences at the Transatlantic Theory Workshop, SasCa PhD Conference, Oligo Workshop, and various seminars and workshops in Oxford. Finally, I am grateful to Darija Halatova and Tommaso Batistoni for coding assistance and to Exeter College and the OEP Fund for their generous financial support. \textit{Links}: \href{https://auctionsolver.herokuapp.com/}{\textcolor{blue}{algorithm}};
\href{https://www.socialscienceregistry.org/trials/8011}{\textcolor{blue}{pre-registration}};
\href{https://auction-experiment.herokuapp.com/demo}{\textcolor{blue}{experiment}}.}}
\author{Itzhak Rasooly\thanks{Department of Economics, University of Oxford.   } }
\date{November 1, 2021}
\maketitle
\vspace*{-0.4cm}
\begin{center}
\end{center}
\vspace*{-0.8cm}
\begin{abstract}
\noindent In this paper, we design and implement an experiment aimed at testing the level-$k$ model of auctions. We begin by asking which (simple) environments can best disentangle the level-$k$ model from its leading rival, Bayes-Nash equilibrium. We find two environments that are particularly suited to this purpose: an all-pay auction with uniformly distributed values, and a first-price auction with the possibility of cancelled bids. We then implement both of these environments in a virtual laboratory in order to see which theory can best explain observed bidding behaviour. We find that, when plausibly calibrated, the level-$k$ model substantially under-predicts the observed bids and is clearly out-performed by equilibrium. Moreover, attempting to fit the level-$k$ model to the observed data results in implausibly high estimated levels, which in turn bear no relation to the levels inferred from a game known to trigger level-$k$ reasoning. Finally, subjects almost never appeal to iterated reasoning when asked to explain how they bid. Overall, these findings suggest that, despite its notable success in predicting behaviour in other strategic settings, the level-$k$ model (and its close cousin cognitive hierarchy) cannot explain behaviour in auctions.

\noindent \\
\vspace{-0.75cm}\\
\noindent \textsc{Keywords:} auction, behavioral game theory, experimental design, level-$k$ models
\vspace{0in}\\
\noindent\textsc{JEL Codes:} C72, C90, D44, D83, D90\\

\bigskip
\end{abstract}
\setcounter{page}{0}
\thispagestyle{empty}
\end{titlepage}
\pagebreak \newpage

\section{Introduction} \label{introduction}

Although the study of auctions remains dominated by equilibrium based approaches, in recent years more `behavioural' alternatives have emerged. Most prominent among these is \citeauthor{crawford2007}'s (\citeyear{crawford2007}) level-$k$ model, which posits that individuals iteratively best respond to a naive `first thought' about how one might bid in an auction. If accurate, the level-$k$ model would substantially revise our understanding of how auctions ought to designed \citep{crawford2009, mechanismdesign}; and would similarly revise how we think about individual reasoning in incomplete information settings. Understanding whether the level-$k$ model is accurate is thus a critical issue.

Surprisingly, however, the existing literature provides little insight into the predictive performance of level-$k$ and whether it can outperform the natural benchmark of Bayes-Nash equilibrium. There is some evidence that the level-$k$ model might be able to explain the ‘winner’s curse’ in common value auctions \citep{crawford2007, costa2015comment}; but also evidence that it might do so for the wrong reasons \citep{ivanov2010can}. In addition, there does not appear to be any serious comparison of level-$k$ and equilibrium in the rather simpler independent private setting (IPV) that forms the starting point for most auctions research. For example, while \cite{crawford2007} do compare level-$k$ and equilibrium using an IPV dataset, the auction structure that generated this dataset leads level-$k$ and equilibrium to make near identical predictions — making a formal comparison of the models almost impossible.\footnote{Indeed, 
\cite{crawford2007} restrict themselves to comparing level-$k$ with a rival behavioural theory, quantal response equilibrium.}\footnote{See also \cite{kirchkamp2011out} for a brief experimental comparison of level-$k$ and equilibrium, albeit in a paper not devoted to level-$k$ reasoning. Since the goal of their paper was not to assess the performance of level-$k$, it is perhaps not surprising that their experiments do not cleanly disentangle level-$k$ and equilibrium: for example, in their first experiment, the predictions of the models co-incide entirely. See also \cite{gillen2009identification}, \cite{an2017identification}, \cite{observational1} and \cite{observational2} for work on the level-$k$ model in the context of observational data. These papers do not attempt to test the accuracy of the level-$k$ model or compare it with equilibrium; and indeed doing so using observational data (when individual valuations are unknown) would obviously be much more challenging than doing so with an experimental approach.}


It is also difficult to learn very much about the relative performance of level-$k$ and equilibrium using existing experimental data sets since these have often been generated by first-price auctions with uniformly distributed valuations \citep{kagel1995auctions}. Unfortunately, the predictions of level-$k$ and equilibrium entirely coincide in such settings (for any $k \geq 1)$. Moreover, while it is easy to disentangle level-$k$ and equilibrium using `exotic’ value distributions, it is doubtful that such distributions would be understood by all subjects in any actual experiment. The challenge is therefore to find an experimental design that is both sufficiently simple for subjects to understand and yet dramatically disentangles the models.


We begin by proposing an environment that does exactly this: a discrete\footnote{We work with discrete models throughout both for the sake of realism (any actual auction must be discrete) but also because the level-$k$ model would be ill-defined in a continuous action setting (we elaborate more upon these points later on).} all-pay auction with uniformly distributed values. We first analyse the equilibrium predictions in this environment, proving that it possesses exactly one symmetric equilibrium. We thus extend a well known result from continuous auction theory to the discrete auction model -- and do so using completely novel (and elementary) arguments.\footnote{Our argument holds for arbitrary value distributions and can be straightforwardly modified to allow for factors like risk aversion. It also extends immediately to the first-price auction; indeed, we suspect that it could be extended to fairly arbitrary auction structures (although we have not formally verified this).} In the the course of our proof, we construct an algorithm\footnote{To aid future experimenters, we have implemented the algorithm and made it available online. See:  \textcolor{blue}{\url{https://auctionsolver.herokuapp.com/}}} that is guaranteed to identify this equilibrium, finding that it closely approximates the continuous equilibrium of traditional auction theory given the parameters chosen in our experiment. In this equilibrium, bids increase (quadratically) in valuations, rising to around half of valuations at the maximum possible valuation that players can draw.

We then conduct a similar analysis of the level-$k$ model but uncover a surprising contrast. Solving the level-$1$ player's optimisation problem reveals a corner solution: such types bid zero for all valuations. As a result, level-$2$ and level-$3$ types also bid close to zero; the former `overcutting' the level-$1$s by (almost always) bidding $1$ and the latter `overcutting' the level-$2$s by (almost always) bidding $2$. Clearly, this is very different from equilibrium; and so we have found an environment that dramatically separates the models. In addition, this environment leads the models to possess very different comparative statics. For example, in the level-$k$ model making the bid discretisation more coarse substantially inflates the predicted bids; whereas in the equilibrium model, it has very little effect.

Next, we conduct a similar exercise in the first-price auction, again assuming uniformly distributed values to make the environment experimentally implementable (and to force the two level-0 specifications proposed by \cite{crawford2007} to coincide -- we elaborate on this point later on). To break the equivalence of the level-$k$ and equilibrium models, we introduce the possibility that subject bids are ‘cancelled’ (in which case they lose the auction automatically). We then solve for the cancellation probability that maximises the ‘distance’ between the two theories. This turns out to separate the models in a similar way: level-$k$ bids are once again close to zero, whereas equilibrium bids are roughly a quarter of valuations. Thus, we once again manage to disentangle the models in an environment simple enough to be implemented in an experiment.

We then conduct an experiment designed to put these insights into practice. The basic idea of the experiment is straightforward. We create the aforementioned environments in a virtual laboratory, allowing us to see which of the models under consideration better predict the observed bids. However, the experiment also possesses various other features aimed at testing the predictive success of the level-$k$ model. For example, we elicit individuals’ levels using a variant on \cite{arad2012}’s 11-20 game; study whether behaviour varies with the bid discretisation in the manner predicted by level-$k$ theory; and finally ask individuals to explain why they bid in the way that they did to see if the level-$k$ model accurately characterises their conscious reasoning processes.

The experiment yields four central findings. First, if individual levels are restricted to the $1-3$ range (to capture an intuitive limit on the number of thinking steps that individuals can be expected to perform), then the level-$k$ predictions are far lower than the observed bids. In addition, the level-$k$ predictions are substantially less accurate than those of equilibrium. These conclusions hold across all the auction structures and treatments that we consider and under a variety of approaches to model testing and comparison. Moreover, this conclusion continues to hold if one allows individuals to vary across auction rounds but retains the requirement that individual levels are `plausible’ (i.e. in the 1--3 range). 

Second, if one fits the model without any such plausibility restrictions, one finds levels typically estimated in the 30--35 range. We view this almost as \textit{reductio ad absurdum} of the model since it conflicts with almost all prior experimental work on this issue \citep{crawford2013}, clashes with basic intuition about the number of thinking steps that individuals can plausibly be expected to conduct, and sits uneasily beside findings from linguistics that individuals struggle to even comprehend higher order statements whose order exceeds three \citep{arad2012}. Moreover, methods for penalising model flexibility (e.g. the Bayesian information criterion) generally do not support the inclusion of these very high levels.

Third, individuals’ levels as estimated from the auctions bear almost no resemblance to their levels as inferred from a game known to trigger level-$k$ reasoning (the 11-20 game). This is hard to square within the context of the level-$k$ model: for if levels reflect either cognitive sophistication or beliefs about the cognitive sophistication of one’s opponents, then one would expect to find a positive correlation between estimated levels across games. Conversely, the absence of positive correlation across any two games can be taken as \textit{prima facie} evidence that level-$k$ reasoning is not operative in at least one of the games under consideration.

Finally, we find that subjects are unlikely to cite iterated reasoning when asked to explain why they bid in the way that they did. In other words, subjects are unlikely to offer conjectures about the bidding strategies of their opponents -- and even less likely to ground these conjectures in what they take to be their opponents’ conjectures about their own bidding strategy. Admittedly, we do see evidence of iterated reasoning in the explanations given by at least one subject -- and this subject was (perhaps not coincidentally) likely to bid in line with the predictions of the level-$k$ model. However, such reports are very rare, suggesting that the level-$k$ model captures at best a small fraction of observed bidding behaviour.

We then conduct a series of extensions and robustness checks. For example, we consider the impact of incorporating risk aversion into the models, allowing subjects to best respond to a distribution over levels (as in \cite{camerer2004cognitive}), changing the $L_0$ specification, as well as conducting various tweaks to our empirical analyses (e.g. including subjects who submit dominated bids). We do not find that any of these exercises alter our substantive conclusions. 

The remainder of this article is structured as follows. Section \ref{separating} outlines an equilibrium and level-$k$ analysis of the discrete all-pay and first-price auctions. Section \ref{experimental_design} presents our experimental design. Section \ref{results} contains our empirical results and Section \ref{robustness_checks} our robustness checks. Finally Section \ref{Concluding_remarks} concludes with a discussion of why the level-$k$ model does so poorly in the auction setting despite its well documented successes in other domains of strategic behaviour.

\section{Separating level-$k$ from equilibrium} \label{separating}

\subsection{The all-pay auction}\label{theory}

 To begin, let us describe our first environment which sharply disentangles the level-\textit{k} and equilibrium models: a (discrete) all-pay auction with uniformly and independently distributed values. There are $n \geq 2$ bidders, each assumed to be risk neutral (we relax this assumption in Section \ref{robustness_checks}). As in our experiment, each player’s valuation is drawn from the set $\mathbb{X}=\{0,1..., x\}$ for some $x \in \mathbb{N^+}$; and the bids that they may submit are restricted to the very same set $\mathbb{X}$. Let $\beta   \colon \mathbb{X} \rightarrow \mathbb{X}$ denote a generic pure strategy (i.e. `bidding function'). We study the case of discrete values and bids since (i) this is inevitably the set-up in the experiment that follows, as indeed it must be in any auction experiment; (ii) if we had instead assumed that the bids were continuous, as is more standard in the literature, then the predictions of the level-\textit{k} model would be ill-defined.\footnote{In particular, the level-$2$ player would not possess a best response: for any number $\epsilon > 0$ that they might bid, there exists a feasible bid $\epsilon' < \epsilon$ that would yield them a higher expected pay-off.} 

Since this is an all-pay auction, individuals need to pay their bid no matter whether they win the object. However, they win the object if and only if their bid is strictly higher than all bids submitted by their opponents. Note that we assume that individuals cannot win the object in the event of a tie. While this is a slightly unusual assumption, it matches the tie-breaking rule that we adopt in the experiment. This tie-breaking rule, in turn, was chosen because it tremendously simplifies the equilibrium (and to a lesser extent the level-\textit{k}) analysis. 

To begin with, we consider what Bayes-Nash Equilibrium (henceforth ‘equilibrium’) predicts that individuals will do in this situation. In particular, we will focus on \textit{symmetric} equilibria. We do this for three reasons. First of all, it is unclear how individuals could be expected to coordinate on asymmetric equilibria without any communication in a one-shot game. Second, symmetric equilibria are analytically more tractable. Third, based on computational experiments reported in \citet{rasooly}, it does not seem as though discrete auction games like those described above possess asymmetric equilibria which are substantially different from their symmetric equilibrium. Therefore, one might hope that the restriction to symmetric equilibria is `almost' without loss of generality.

Unfortunately, since our auction is discrete we cannot directly apply results from ‘standard’ auction theory. We can, however, establish the following very useful result (which holds entirely independently of the distribution from which each player’s valuation is drawn).

\begin{proposition}\label{prop1}
The discrete all-pay auction has exactly one symmetric equilibrium.
\end{proposition}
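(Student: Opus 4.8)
The plan is to establish existence and uniqueness separately, both by an explicit construction that builds the symmetric equilibrium bid function $\beta$ value-by-value. Write the symmetric strategy as a (weakly increasing, as one should first argue) function $\beta\colon \mathbb{X}\to\mathbb{X}$, and for a fixed profile of the other $n-1$ players using $\beta$, define for each possible own-bid $b\in\mathbb{X}$ the probability $G(b)$ of winning, i.e. the probability that all $n-1$ opponents bid strictly below $b$. Since values are i.i.d.\ and $\beta$ is common, $G$ is a step function determined by $\beta$; the expected payoff of a bidder with value $v$ who bids $b$ is $v\,G(b)-b$ (all-pay). The key structural observation is that the best response of the type-$v$ bidder depends on $\beta$ only through $G$, and that $G$ in turn depends only on how many opponents' \emph{bids} fall at or below each level — so the whole fixed-point problem can be recast as a self-consistency condition on a finite, increasing step function.

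First I would prove a monotonicity/structure lemma: in any symmetric equilibrium the bid function must be weakly increasing in value (a standard single-crossing argument — if $v'>v$ then $v'G(b)-b$ has increasing differences in $(v,b)$, so higher types weakly prefer higher bids), and moreover $\beta(0)=0$. Next I would set up the construction inductively on the value $v=0,1,\dots,x$: having pinned down the behaviour of types $0,\dots,v-1$, and knowing (by monotonicity) that types $v,\dots,x$ bid at least $\beta(v-1)$, one determines $\beta(v)$ as the smallest maximiser of $v\,G(\cdot)-(\cdot)$ over the admissible range, where the relevant part of $G$ has already been fixed by the lower types and by the \emph{number} of higher types — which is known — even though their exact bids are not yet known. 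The point to verify is that the portion of $G$ needed to compute the type-$v$ best response is exactly the portion already determined, so the recursion is well-defined and forced at every step; this simultaneously yields existence (the constructed $\beta$ is by design a mutual best response) and uniqueness (every symmetric equilibrium must agree with the construction at each $v$, choosing the smallest maximiser being the only subtlety, handled by a tie-breaking argument using the strict-inequality winning rule).

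The main obstacle I anticipate is precisely the interaction between ties and indifference: when the type-$v$ optimisation has multiple maximising bids, one must show that in equilibrium the mass of types choosing each such bid is pinned down, because the winning probability $G$ at a given bid level depends on how many \emph{opponents} sit exactly at that level, and the strict-inequality tie-break makes bidding at a crowded level strictly worse than bidding just above it. I would argue that this rules out any equilibrium in which a positive measure of types ``pile up'' at a bid where they are only weakly best-responding unless that pile-up is itself consistent — and show that only one configuration survives, so the smallest-maximiser selection is not a genuine choice but is forced. The remaining steps — verifying that the constructed function is a genuine best response for \emph{every} type simultaneously (not just the marginal one), and that feasibility $\beta(v)\le v$-type constraints never bind in a way that breaks the induction — are routine monotone-comparative-statics bookkeeping. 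This construction is also exactly the algorithm referred to in the text, so presenting the proof constructively has the side benefit of exhibiting the promised equilibrium-finding procedure.
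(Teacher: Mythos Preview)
Your proposal has a genuine gap: you set up the symmetric strategy as a \emph{pure} function $\beta\colon\mathbb{X}\to\mathbb{X}$ and propose to resolve indifferences by a forced smallest-maximiser selection. But in the discrete all-pay auction the unique symmetric equilibrium is in \emph{properly mixed} strategies --- indeed the paper later uses (citing \cite{rasooly}) that no symmetric pure-strategy equilibrium exists under uniform values. Concretely, there are boundary values $v$ at which the bidder must randomise between two adjacent bids with a specific interior probability (e.g.\ in the paper's worked example with $n=2$, $S=101$, the type $v=10$ plays $b=0$ with probability $0.1$ and $b=1$ with probability $0.9$). Your inductive scheme, which outputs an integer bid for each value, cannot produce this; and your tie-breaking discussion talks about the ``mass of types'' piling up at a bid, which is the wrong object --- the issue is not which \emph{types} sit at a bid but how a \emph{single} discrete type splits its probability across bids.

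The paper's proof shares your broad architecture (monotonicity lemma, $s(0)=\{0\}$, a no-gaps lemma, then an inductive/algorithmic construction that forces uniqueness), but the crucial extra ingredient is a change of variables: rather than iterating over values and assigning bids, the paper iterates over \emph{bids} and records, for each bid $i$, a ``jump point'' $j_i = v_i + \mathbb{P}(b<i\mid v_i)$ encoding both the lowest value $v_i$ that ever plays $i$ \emph{and} the mixing probability at that boundary value. The algorithm then pins down each $j_i$ as the minimal solution to an indifference inequality, and these $j_i$ are generically non-integer --- this is exactly where the required randomisation enters. Your value-by-value recursion could likely be repaired along these lines, but as written it would terminate with a pure strategy that is not an equilibrium.
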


Proposition \ref{prop1} contains two claims: that a symmetric equilibrium exists and that it is unique. Establishing the existence part is straightforward: since the game is finite, one can essentially just apply \cite{harsanyi1967}. However, establishing uniqueness is considerably more involved: see Appendix \ref{proofs} for the details. Importantly, the uniqueness proof is based on an \textit{algorithm} that explicitly constructs the (candidate) equilibrium (see Appendix \ref{algorithm} for examples of the algorithm in action). Thus, our result does not merely tell us that there is exactly one symmetric equilibrium, but also shows us exactly how to compute it.\footnote{This is invaluable from an experimental perspective since `brute force' approaches to equilibrium computation take several hours even for auctions with a handful of possible values and essentially intractable for auctions of the size considered in our experiment \citep{rasooly}. In contrast, our algorithm is able to compute the symmetric equilibria of our experimental games in less than a second.}

Having developed this algorithm, we now use it compute the symmetric equilibrium in the case of the distribution (uniform) used in our experiment. Figure \ref{fig1} displays the result for the case of $x =15$ and $n=3$: a dot represents a bid that is submitted with positive probability. As can be seen, the equilibrium is in (properly) mixed strategies; but it is closely approximated, quantitatively speaking, by the continuous equilibrium\footnote{See, for instance, \citet{klemperer1999} for a derivation.} of `textbook' theory:
\begin{equation}\label{all_pay_eq}
\beta(v) = \left( \frac{n-1}{n} \right)\frac{v^n}{x^{n-1}}
\end{equation}
As in the continuous model, equilibrium bids increase slowly when valuations are low but quickly when valuations are high. When values equal the maximum $x$, bids are roughly equal to the equilibrium bid in the first-price auction, i.e. $x(n - 1)/n$.

\begin{figure}[h!]
    \centering
    \caption{Equilibrium in the all-pay auction}
    \vspace{-1em}
    \includegraphics[width=14cm]{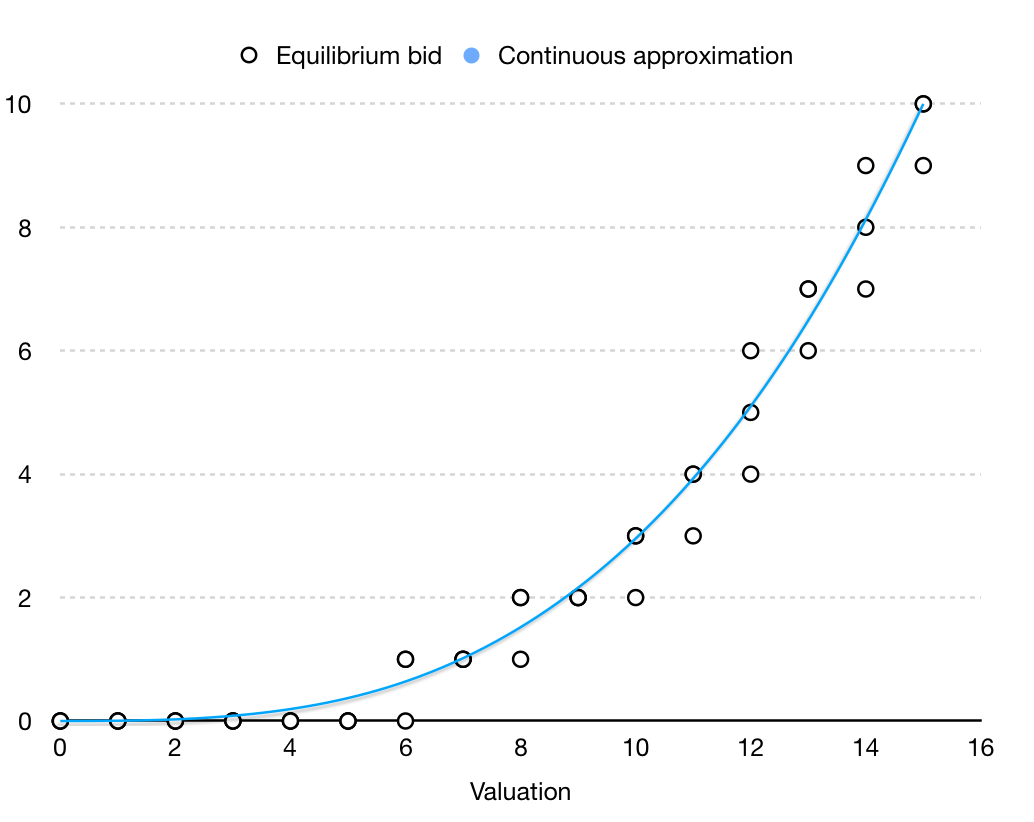}
    \label{fig1}
\end{figure}

Having discussed the predictions of equilibrium (our main benchmark), we now turn to the predictions of the level-\textit{k} model. As formulated by \citet{crawford2007}, the model begins with a `naive' type of player (the level-$0$) who either bids their valuation (the `truthful' specification) or either bids any element of the strategy space $\mathbb{X}$ with equal probability (the `random' specification). This is meant to capture a player's `first thought' about how their opponents will play the game. Higher types are then defined inductively: for $k \geq 1$, a level-$k$ player submits the bid that maximises their expected payoff given that their opponents are all level $k-1$ (see \citet{crawford2007} for interpretation and elaboration). In the following, we will write $\beta^k \colon \mathbb{X} \rightarrow \mathbb{X}$ to denote a level-$k$ player's bidding strategy.

We now derive the behaviour of the types. Assume, as in our experiment, that values are uniformly distributed on $\mathbb{X}$ (retreating from the generality of Proposition \ref{prop1}). Given the `truthful' specification ($b = v$), level-$0$ bids are then uniform on $\mathbb{X}$. Note, however, that this is also the `random' specification. We thus see that, quite aside from making the experimental rules easier for subjects to grasp, imposing uniform values ensures that the two level-$0$ specifications lead to identical predictions, a great advantage when it comes to testing the model.

Given that level-$0$ bids are uniform, the level-$1$ player solves the problem
\begin{equation}
\max_{b \in \mathbb{X}} \quad \pi(v, b) \equiv v\mathbb{P\text{(win}}|b) - b = v\left(\frac{b}{x+1}\right)^{n-1} - b.
\end{equation}
This leads to a corner solution: bidding 0 for all values $v \in \mathbb{X}$. To see why this is true, note that the benefit to such a player of increasing their bid by 1 is the gain in probability $\Delta \mathbb{P}$ multiplied by their valuation $v$. If $n = 2$, then $\Delta \mathbb{P} = 1/(x + 1)$; so the benefit of bidding one more is $v/(x + 1)$. Meanwhile, the cost of bidding one more is simply $1$. Since $v/(x + 1) < 1$ for all $v \in \mathbb{X}$, the benefit of increasing one’s bid is always lower than the cost -- which means that the optimal bid is 0. Moreover, if it is optimal to bid 0 when $n=2$, it must be optimal for any $n \geq 2$: adding more bidders decreases the expected payoff from submitting any positive bid, while leaving the pay-off from bidding 0 fixed.

It is then straightforward to derive the strategies of higher levels. For now, assume that a level-\textit{k} player breaks ties by choosing the lowest optimal bid (while this assumption is natural since it allows us to avoid dominated bids, we relax it in Section \ref{robustness_checks}). Given this assumption, a level-2 player sets $\beta(v) = 1$ for every value $v \geq 2$ (`over-cutting' their level-1 opponents); and sets $\beta(v) = 0$ otherwise. Similarly, a level-$3$ player sets $\beta(v) = 2$ for all $v \geq 3$; and sets $\beta(v) = 0$ otherwise. Thus, for all `reasonably low’ levels, a level-\textit{k} type player bids
\begin{equation}
\beta_{LK}(v) = 
\begin{cases} k-1 &\mbox{if } v \geq k \\
0 & \mbox{otherwise} \end{cases}
\end{equation}

The strategy $\beta_{LK}$ describes level-$k$ behaviour at `reasonably low' levels. At higher levels, however, this simple pattern breaks down. The strategy $\beta_{LK}$ calls for a level-$k$ player with a value of $k-1$ to bid $0$, thereby earning a (certain) payoff of $0$. There is another option, however: bid $1$ in the hope of winning when one's opponents all bid $0$. As $k$ rises, this second option becomes more attractive; and eventually it pays to deviate from $\beta_{LK}$. It is easy to check, however, that level-$k$ players must bid $\beta_{LK}$ provided that
\begin{equation}\label{inequality}
k \leq (x+1)^\frac{n-1}{n} + 1;
\end{equation}
an inequality that gives concrete meaning to the phrase `reasonably low' in the previous paragraph.

Finally, we argue that level-$k$ can never coincide with equilibrium; with the implication that level-$k$ bidding functions must cycle as $k \rightarrow \infty$. To prove this, note that, by Proposition 4 of \cite{rasooly}, the discrete all-pay auction with uniform values does not possess a symmetric equilibrium in pure strategies. Of course, it does have a symmetric equilibrium (see our Proposition \ref{prop1}), so this must be in mixed strategies. Now, given our assumption about level-\textit{k} tie breaking, it is clear that a level-\textit{k} player could never choose a mixed strategy. As a result, for any $k \in \mathbb{N}^+$, level-\textit{k} can never coincide with the symmetric equilibrium. Moreover, this means that the level-\textit{k} predictions must cycle. To see this, suppose that they did not cycle. Since the set of bidding functions is finite, there would need to exist some $k \in \mathbb{N}$ such that $\beta^k$ coincides with $\beta^{k'}$ for all $k' > k$. (Otherwise, a strategy would need to re-appear at some stage -- which would necessarily trigger a cycle.) But then $\beta^k$ would be a symmetric, pure strategy equilibrium, contradicting \cite{rasooly}'s Proposition 4.


The following proposition summarises the findings of the previous paragraphs.

\begin{proposition}\label{prop2}
For every level $k$ that satisfies (\ref{inequality}) and every value $v \in \mathbb{X}$, a level-$k$ bidder sets $\beta^k(v)=\beta_{LK}(v)$. Moreover, there is no $k \in \mathbb{N}$ at which $\beta^k$ coincides with equilibrium; with the implication that $\beta^k$ cycles as $k \rightarrow \infty$.
\end{proposition}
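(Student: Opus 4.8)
The proposition bundles three assertions, and the plan is to establish them in order, with the bulk of the work falling on the first. First comes the closed form $\beta^k=\beta_{LK}$ for every level $k$ satisfying (\ref{inequality}); then the claim that no $\beta^k$ coincides with equilibrium; and the cycling of $(\beta^k)_{k\geq 1}$ will follow from that.

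For the closed form I would induct on $k$. The cases $k=1,2,3$ are the explicit computations already displayed above. For the inductive step, fix $k\geq 4$ satisfying (\ref{inequality}) and assume the claim for $k-1$ --- legitimate because $k-1<k$ automatically satisfies (\ref{inequality}) as well --- so that $\beta^{k-1}(v)=k-2$ for $v\geq k-1$ and $\beta^{k-1}(v)=0$ otherwise. Then each opponent of a level-$k$ player independently submits the bid $0$ with probability $q:=(k-1)/(x+1)$ and the bid $k-2$ with probability $1-q$. Against this, every bid in $\{2,\dots,k-2\}$ is strictly dominated by the bid $1$: each of those bids beats exactly the opponents bidding $0$ (a bid of $k-2$ merely ties, hence by our tie-breaking rule loses to, the opponents bidding $k-2$), so all of them share the winning probability $q^{n-1}$ while the bid $1$ is cheapest; and every bid exceeding $k-1$ is strictly dominated by the bid $k-1$, both winning with certainty. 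So an optimal bid lies in $\{0,1,k-1\}$, with payoffs $u(0)=0$, $u(1)=vq^{n-1}-1$, and $u(k-1)=v-(k-1)$.

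The one algebraic ingredient is that (\ref{inequality}) is equivalent to $(k-1)^n\leq(x+1)^{n-1}$, i.e.\ to $(k-1)q^{n-1}\leq 1$. Granting this: for $v\leq k-1$ one gets $u(1)\leq(k-1)q^{n-1}-1\leq 0=u(0)$ and $u(k-1)=v-(k-1)\leq 0=u(0)$, with equality in the latter only at $v=k-1$, so the lowest optimal bid is $0=\beta_{LK}(v)$; and for $v\geq k$ one gets $u(k-1)-u(0)=v-(k-1)\geq 1>0$, while $u(k-1)-u(1)=v(1-q^{n-1})-(k-2)$ is increasing in $v$ (here $q<1$, as $(k-1)^n\leq(x+1)^{n-1}<(x+1)^n$ forces $k-1<x+1$) and at $v=k$ equals $2-kq^{n-1}$, which is positive since $kq^{n-1}=(k-1)q^{n-1}+q^{n-1}\leq 1+q^{n-1}<2$; hence $k-1$ is the unique best response and equals $\beta_{LK}(v)$. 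This closes the induction. I expect this case analysis to be the main obstacle --- in particular the boundary value $v=k-1$, where (\ref{inequality}) is exactly the condition making the bid $0$ weakly beat the bid $1$, together with the strictness of $u(k-1)>u(1)$ for $v\geq k$, which is what forces the lowest optimal bid to be $k-1$ rather than $1$; the monotonicity in $v$ of the relevant payoff differences keeps it short.

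For the second assertion, the key point is that the tie-breaking convention (lowest optimal bid) makes each $\beta^k$ a genuine pure strategy, so the profile in which all $n$ players use $\beta^k$ is a symmetric profile in pure strategies; were it a Nash equilibrium it would be a symmetric pure-strategy equilibrium of the discrete all-pay auction with uniform values, which Proposition 4 of \cite{rasooly} excludes. (Equivalently, combined with our Proposition \ref{prop1}, this says that the unique symmetric equilibrium is properly mixed, whereas each $\beta^k$ is pure.) For the third assertion, write $\beta^k=T(\beta^{k-1})$, where $T$ is the level-$k$ best-response map --- single-valued thanks to the tie-break and acting on the finite set of functions $\mathbb{X}\to\mathbb{X}$. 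Iterating a map on a finite set yields an eventually periodic orbit, so $(\beta^k)$ is either eventually constant or eventually periodic with period at least $2$; in the former case there is some $k$ with $\beta^k=\beta^{k+1}=T(\beta^k)$, i.e.\ $\beta^k$ is a best response to itself, making $(\beta^k,\dots,\beta^k)$ a symmetric pure-strategy equilibrium and contradicting the second assertion. Hence $(\beta^k)$ cycles.
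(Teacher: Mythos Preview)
Your proposal is correct and follows essentially the same route as the paper's argument in the main text: an inductive derivation of $\beta^k=\beta_{LK}$ from the level-$(k-1)$ bidding profile, followed by the observation that each $\beta^k$ is pure (by the tie-breaking convention) whereas the unique symmetric equilibrium is properly mixed (via Proposition 4 of \cite{rasooly} combined with Proposition \ref{prop1}), and finally the finite-orbit argument for cycling. Your treatment of the first part is in fact more complete than the paper's, which only sketches the key deviation at $v=k-1$ and asserts that (\ref{inequality}) is ``easy to check''; you make explicit the reduction of the candidate set to $\{0,1,k-1\}$ and carry out the comparison $u(k-1)>u(1)$ for $v\geq k$, including the strictness via $kq^{n-1}<2$, which the paper leaves implicit.
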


The figure below plots both the equilibrium predictions (or technically the continuous approximation) and level-\textit{k} predictions (for levels 1--3). As can be seen, the auction structure has done an excellent job disentangling the predictions of the models. First, given any reasonable calibration of the levels, the level-\textit{k} model predicts bids that are substantially lower than those predicted by equilibrium. Second, while the equilibrium bidding function is generally increasing, the level-\textit{k} bidding functions are flat at all but one value, providing a second contrast which one can test. Finally, our non-convergence result means that the theories remain separated at arbitrarily high levels. It follows then, that we have found a powerful way to disentangle the models, and to do so without abandoning the uniform values specification which is so useful when implementing any actual experiment.

\begin{figure}[H]
    \centering
    \caption{Comparing level-$k$ with equilibrium}
    \vspace{-0.3em}
    \hspace{-0.9em}
    \includegraphics[width=14cm]{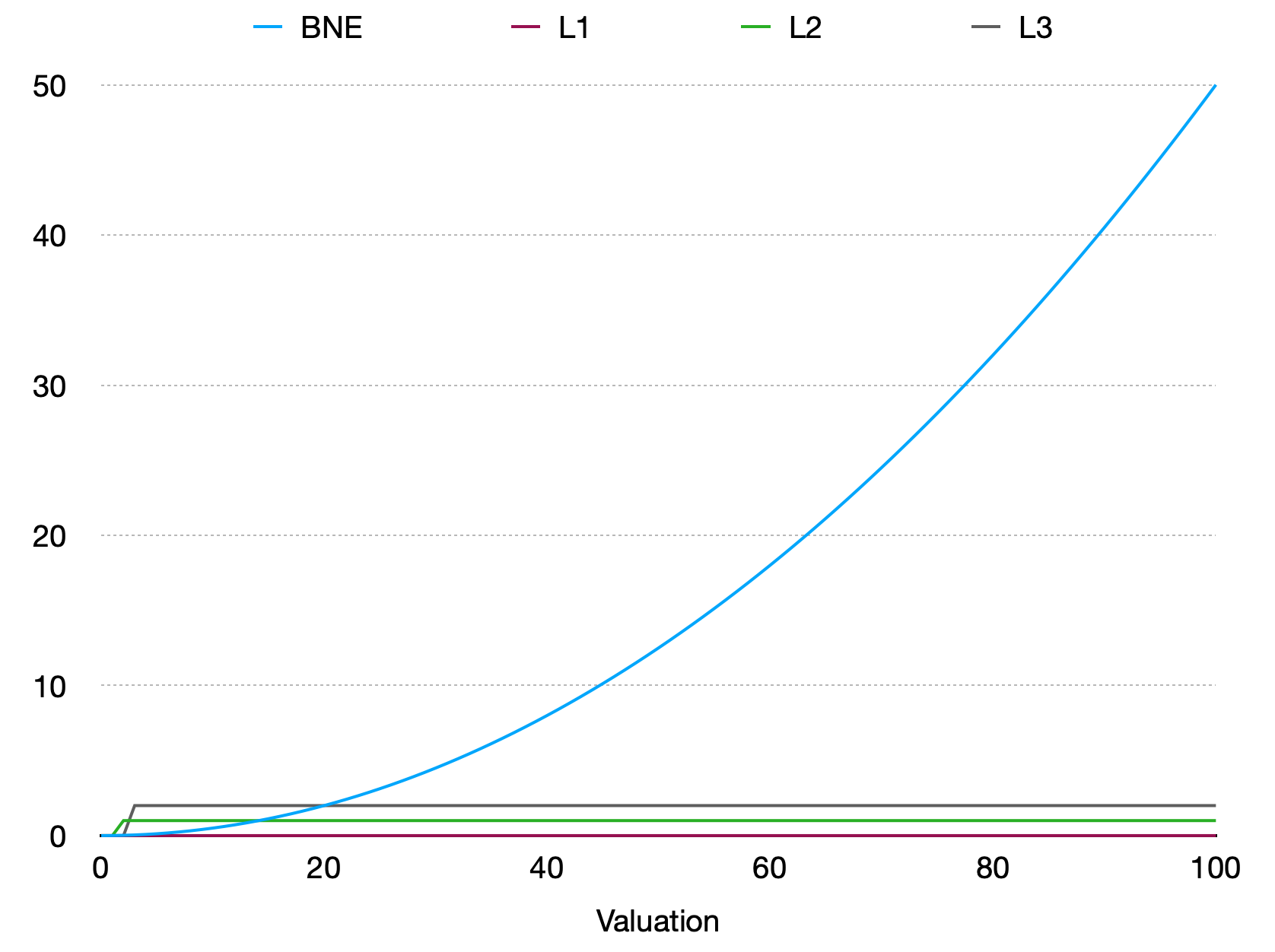}
    \label{fig2}
\end{figure}

\subsection{The first-price auction}\label{first_price_auction}

It is clearly desirable to find a second auction structure which sharply separates the models -- that way, we can check the relative performance of the models in two different strategic settings. To this end, we now examine the possibility of doing this within the first-price auction. As noted by \cite{crawford2007}, the level-\textit{k} and equilibrium models coincide (for every $k \geq 1$) in the first-price auction with uniformly distributed values. To break this equivalence, we now suppose that each player's bid only goes through with probability $p \in (0, 1)$; otherwise their bid is cancelled and they cannot win the auction. We now consider what probability $p$ maximally disentangles the theories.

Formally, the model structure is much as before: for example, values and bids are restricted to the same set of consecutive integers $\mathbb{X}$. However, there are two differences. First, since this is now a first-price auction, a player pays their bid if and only if they win the object. Second, as just mentioned, each player's bid is now cancelled with probability $1 - p \in (0, 1)$; and a player wins if and only if their (non-cancelled) bid is strictly higher than all the non-cancelled bids submitted by their opponents.

As before, we begin by demonstrating that a symmetric equilibrium exists and is unique\footnote{While our uniqueness result still holds under arbitrary value distributions, it now requires the restriction that $p < 1$; the role of this assumption is to ensure that all bids have positive probability of winning and thus to rule out equilibria involving dominated bids. Since this restriction holds in our experiment, this technical point need not concern us here.}:

\begin{proposition}
If $p \in (0, 1)$, then the discrete first-price auction has exactly one symmetric equilibrium.
\end{proposition}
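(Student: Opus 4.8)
The plan is to reuse the two-step template of Proposition~\ref{prop1}: a one-line existence argument and a constructive uniqueness argument driven by an explicit algorithm. Existence is immediate, since the cancellation device leaves the game finite and symmetric, so \cite{harsanyi1967} again delivers a symmetric equilibrium. All the substance is in uniqueness, and the approach is to show that any symmetric equilibrium must coincide, value by value, with the output of a single algorithm that sweeps upward through the valuations.

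I would first record the shape of the interim payoff and a few structural facts. Writing $\sigma$ for a candidate symmetric (possibly mixed) strategy and $F_\sigma$ for the cdf of the bids it induces, a bidder with value $v$ who submits $b$ earns
\[
\pi(v,b) \;=\; p\,(v-b)\,\bigl[(1-p)+p\,F_\sigma(b-1)\bigr]^{n-1},
\]
since their own bid survives with probability $p$ and, conditional on that, they beat a given opponent exactly when that opponent is cancelled or bids at most $b-1$; recall that ties do not win. Four observations follow. (i) A value-$0$ bidder strictly prefers $b=0$, and no type ever bids above its own value, that being strictly dominated by bidding $0$. (ii) The bracketed factor is at least $(1-p)^{n-1}>0$ for \emph{every} bid $b$: this is the only place the hypothesis $p<1$ enters, and it says every bid wins with positive probability, so no bid is weakly dominated in the degenerate way that would breed extra equilibria. (iii) $\pi$ has weakly increasing differences in $(v,b)$, since for $b'>b$ the cross difference equals $p\bigl(\bigl[(1-p)+pF_\sigma(b'-1)\bigr]^{n-1}-\bigl[(1-p)+pF_\sigma(b-1)\bigr]^{n-1}\bigr)\ge 0$. (iv) Hence, in any symmetric equilibrium, the bid supports of distinct values are ordered (a higher type never places weight below a bid used by a lower type, with at most one bid shared), and --- invoking (ii) --- each value's support is a contiguous block of bids, with no interior gaps and no gaps between the blocks of consecutive values; the proofs of (iv) are short, turning (iii) and (ii) against a would-be misordering or gap to exhibit a strictly profitable deviation or a positive-probability bid lying in a zero-probability range.

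Given this structure the algorithm is natural. At value $0$ it sets the strategy to the point mass on $0$; for $v=1,\dots,x$ it determines value $v$'s behaviour from value $v-1$'s. The support of value $v$ is a contiguous block whose lowest bid is either the highest bid of value $v-1$ (a ``seam'') or one above it; reading the indifference conditions from the bottom of the block upward yields, in closed form, first the equilibrium payoff $U(v)$ and then each successive mixing weight of value $v$ in terms of the already-fixed lower tail of $F_\sigma$ (inverting an $(n-1)$-th power using (ii)); the top bid of the block and the seam/no-seam choice are fixed by the accounting identity that value $v$ adds total probability $1/(x+1)$ to $F_\sigma$, together with the requirement that no bid outside the block be profitable. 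One then checks that this system has a \emph{unique} solution, so the algorithm's output is well defined and every symmetric equilibrium must equal it.

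The main obstacle is precisely that last step: showing the per-stage system has a unique solution rather than merely a nonempty solution set. The delicate pieces are (a) pinning down the mixing weights, not just the support, which I would get from the fact that shifting probability mass upward within value $v$'s block strictly raises the relative payoff of the higher bids, so indifference holds at only one weight profile; (b) uniqueness of the block's top bid, which follows because the cumulative probability generated by the indifference recursion is strictly monotone in the candidate top bid; and (c) resolving the seam/no-seam dichotomy, where one must verify that exactly one of the two options is simultaneously consistent with the optimality conditions of values $v-1$ and $v$. Existence, the dominance observations, and the comparative-statics facts in (iii)--(iv) are routine given the machinery already developed for the all-pay auction.
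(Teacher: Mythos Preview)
Your proposal is correct and follows the same high-level template as the paper: existence via \cite{harsanyi1967}, the three structural lemmas (value $0$ bids $0$; monotonicity of supports; no gaps, the last being exactly where $p<1$ bites), and then a constructive algorithm whose output is shown to be the unique candidate. The substantive difference is in how the algorithm is parametrised. The paper works in \emph{jump form}: rather than sweeping by valuation, it indexes by bid level $i$ and defines $\hat{j}_i$ as the minimum $j_i\in(j_{i-1},S]$ such that $\pi(\lfloor j_i\rfloor,i)\geq \pi(\lfloor j_i\rfloor,i-1)$, where $j_i$ simultaneously encodes the threshold value $\lfloor j_i\rfloor$ and the cumulative mixing probability below $i$. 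This bid-indexed representation buys a cleaner uniqueness argument: if a putative equilibrium $j$ first differs from $\hat{j}$ at coordinate $i$, then $j_i<\hat{j}_i$ contradicts the minimality of $\hat{j}_i$ directly, while $j_i>\hat{j}_i$ makes bid $i$ strictly preferred to $i-1$ at value $\lfloor \hat{j}_i\rfloor$, contradicting optimality of the latter under $j$. Your value-indexed sweep computes the same object but forces you to separately handle the seam/no-seam dichotomy, the top-of-block determination, and the mixing-weight recursion --- your obstacles (a)--(c) --- whereas the jump-form minimisation collapses all three into a single scalar per bid level. Both routes arrive at the same equilibrium; the paper's is shorter to write down and avoids the case analysis you correctly flag as delicate.
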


The arguments used to establish this result are very similar to those used to prove Proposition 1. In particular, the argument is based on the very same algorithm that constructs the equilibrium. As before, using this algorithm reveals that the equilibrium bids are well approximated by the continuous equilibrium, which is now (under uniform values)\footnote{Equation (\ref{formula}) may be derived by noting the formal equivalence of our model with a model with an uncertain number of bidders. One can then apply Theorem 1 of \citet{harstad1990}. Alternately, one can note that our model is also equivalent (at least as far as undominated equilibria are concerned) with a model in which each player's value distribution has a mass of probability on zero. One can then use the `standard formula' (see, e.g., \citet{krishna2009})
$$ \beta(v) = v - \int_0^v \frac{G(y)}{G(v)} dy, $$
and insert $ G(v) = \left(1 - p + pv/x \right)^{n-1}$ (recalling that $G(v)$ is the probability that the maximum valuation is $v$ or lower).}
\begin{equation}\label{formula}
\beta(v) = \left(\frac{n-1}{n}\right)v - \frac{x(1 - p)}{np}\left[ 1 - \left(\frac{1 - p}{1 - p + p(v/x)}\right)^{n-1} \right]
\end{equation}

As can be seen from Figure 2, introducing the possibility that bids are cancelled makes the bidding less aggressive. If $p = 1$, i.e. bids are never cancelled, our expression reduces to $\beta(v) = v(n-1)/n$. In other words, we recover the standard formula for equilibrium with uniformly distributed values. Moreover, it is easy to check that for any $v\in \mathbb{X}$, $\beta(v)$ is strictly decreasing in $p$. Finally it can be shown (using L'H\^{o}pital's rule) that $\beta(v) \rightarrow 0$ as $p \rightarrow 0$ for all $v \in \mathbb{X}$. This is also not surprising: if one is almost certain to win the auction anyway (in the event that one's own bid is not cancelled), then one may as well bid $b \approx 0$.

We now turn to the predictions of the level-\textit{k} model. Given that level-$0$ bids are either cancelled or uniform on $\mathbb{X}$, a level-$1$ player solves the problem
\begin{equation}\label{objective}
\max_{b \in \mathbb{X}} \quad  (v-b)\mathbb{P\text{(win}}|b) = (v - b)\left(1 - p + \frac{pb}{x+1}\right)^{n-1}
\end{equation}
Since the objective function in (\ref{objective}) is single peaked, it is obvious that the integer that maximises this function can be obtained by finding the $b \in [0, x]$ that maximises the function and rounding up or down. In other words, we have a good continuous approximation of the exact level-$1$ bidding function. To simplify formulae, we will therefore proceed as if level-$1$ types could choose any $b \in [0, x]$ (although we will always use the exact model predictions when analysing the data from our experiment). Routine optimisation subject to the inequality constraint then reveals that
\begin{equation}
\beta^1(v) = \max\left\{\left( \frac{n - 1}{n} \right)v - \left(\frac{1 - p}{p}\right)\frac{x}{n}, \hspace{0.3em}0\right\}
\end{equation}
As in equilibrium, level-$1$ bids are decreasing in the probability $p$. Moreover, for all $p \leq 1/n$, the level-$1$ player sets $\beta(v) = 0$ for all $v\in \mathbb{X}$, i.e. we recover the same corner solution as in the all-pay auction. Intuitively, if $p$ is quite low, then there is a large chance that the player’s opponents’ bids are all cancelled -- so the player may as well just bid $0$ and receive their entire valuation whenever they `get lucky'. 

Our goal is to disentangle the models. To this end, we define the distance between the theories as
\begin{equation}\label{distance}
d(p) \equiv \int_0^x \left| \beta(v) - \beta^{1}(v) \right| dv.
\end{equation}
Of course, the distance $d(p)$ is simply the total area between the (approximated) level-$1$ and equilibrium bidding functions. We now study the probability $p^*$ that maximises this distance. Recall from above that the level-$1$ player bids $\beta(v) = 0$ for all $v \in \mathbb{X}$ provided that $p \leq 1/n$. Meanwhile, every equilibrium bid $\beta(v)$ is strictly decreasing in $p$ (moving closer to 0). From this, we see that we could never maximise (\ref{distance}) by choosing some $p \in [0, 1/n)$. The reason is that increasing $p$ (by a sufficiently small $\epsilon > 0$) would increase every equilibrium bid $\beta(v)$ while leaving the level-\textit{k} bids unchanged, thus increasing the distance between the two theories. In other words, we have the following easy result.
\begin{proposition}\label{prop3}
$p^* \geq 1/n$.
\end{proposition}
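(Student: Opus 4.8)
The plan is to show that no $p \in [0, 1/n)$ can maximise $d$, from which $p^{*} \ge 1/n$ is immediate. The leverage comes from a fact recorded above: whenever $p \le 1/n$ the level-$1$ bidding function is identically zero, $\beta^{1}(v) = 0$ for all $v$. On this range the absolute value in (\ref{distance}) therefore does nothing and $d(p) = \int_0^x \beta(v)\,dv$, so the whole question reduces to showing that this integral is strictly increasing in $p$ on $(0, 1/n]$.

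The substantive step is to show that the equilibrium bid $\beta(v)$ is, for each fixed $v \ge 1$, strictly increasing in $p$. Rather than differentiating the closed form (\ref{formula}) directly, it is cleaner to use the integral representation $\beta(v) = v - \int_0^v \frac{G(y)}{G(v)}\,dy$ with $G(y) = (1 - p + py/x)^{n-1}$ (the representation used in deriving (\ref{formula})). For fixed $v$ and any $y$ with $0 \le y < v$, one has $G(y)/G(v) = \bigl( (1 - p(1 - y/x)) / (1 - p(1 - v/x)) \bigr)^{n-1}$; writing $a = 1 - y/x$ and $b = 1 - v/x$, so that $a > b \ge 0$, the map $p \mapsto (1 - pa)/(1 - pb)$ has derivative $(b - a)/(1 - pb)^2 < 0$, and since the base is positive, raising to the power $n - 1 \ge 1$ keeps it strictly decreasing. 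Hence $\int_0^v G(y)/G(v)\,dy$ is strictly decreasing in $p$ for every $v \ge 1$, so $\beta(v)$ is strictly increasing in $p$ there (while $\beta(0) = 0$).

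Assembling the pieces: for $p \in (0, 1/n]$ we have $\beta^{1} \equiv 0$ and $\beta(v) \ge 0$ — indeed $\beta(v) > 0$ for $v \ge 1$, since $\beta(v) \to 0$ as $p \to 0^{+}$ and $\beta(v)$ is increasing in $p$ — so $d(p) = \int_0^x \beta(v)\,dv$, and because the integrand is strictly increasing in $p$ on a set of $v$ of positive measure, $p \mapsto d(p)$ is strictly increasing on $(0, 1/n]$. Now take any $p_0 \in [0, 1/n)$ and set $p' = (p_0 + 1/n)/2 \in (0, 1/n)$. If $p_0 > 0$ then $p_0, p' \in (0,1/n]$ and strict monotonicity gives $d(p') > d(p_0)$; if $p_0 = 0$ then $d(p_0) = 0$ (both bidding functions vanish at $p = 0$) while $d(p') > 0$. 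Either way $p_0$ is not a maximiser, so a maximiser — which exists since $d$ is continuous on the compact interval $[0,1]$ — must satisfy $p^{*} \ge 1/n$.

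The only place where anything nontrivial happens is the monotonicity claim $\partial \beta / \partial p > 0$; the point of the plan is that routing it through $\beta(v) = v - \int_0^v G(y)/G(v)\,dy$ reduces it to a single sign computation, whereas differentiating (\ref{formula}) in $p$ would reach the same conclusion much more laboriously. Everything else — discarding the absolute value on $[0, 1/n]$, passing from the pointwise monotonicity of $\beta(v)$ to that of $\int_0^x \beta(v)\,dv$, and exhibiting an improving $p'$ — is routine.
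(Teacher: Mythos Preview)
Your proof is correct and follows essentially the same approach as the paper: on $[0,1/n)$ the level-$1$ function is identically zero, so $d(p)$ reduces to the integral of $\beta(v)$, which is strictly increasing in $p$; hence no $p<1/n$ can be a maximiser. The only embellishment is that you actually \emph{prove} the monotonicity of $\beta(v)$ in $p$ via the integral representation $\beta(v)=v-\int_0^v G(y)/G(v)\,dy$ (reducing it to a one-line derivative computation), whereas the paper merely asserts it; this is a clean way to handle the step and avoids differentiating the closed form (\ref{formula}).
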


Unfortunately, $p^*$ cannot be solved analytically. Nonetheless, we can compute the optimal $p$ numerically given various values of $n$ (see Appendix \ref{optimising}). We find, for example, that $p^* \approx 0.54$ if $n = 2$, $p^* \approx 0.34$ if $n = 3$ and $p^* \approx 0.26$ if $n = 4$. Overall, then, choosing $p = 1/n$ seems like a good approximation of the optimal probability (the loss in terms of distance is small). This has the further advantage of making the probabilities easier for subjects to grasp than the exact solutions would be; so we ultimately opt for $p = 1/n$ in our experiment.

Assuming that $p=1/n$, it is then straightforward to derive the exact level-\textit{k} bidding strategies. As noted, the level-$1$ player sets $\beta^1(v) = 0$ for all $v \in \mathbb{X}$. As a result, the level-$2$ player bids $1$ (when their value is high) or bids $0$ (when their value is low). Similarly, the level-$3$ player bids $2$ for high values and $0$ otherwise. This simple pattern characterises the behaviour of level-$k$ bidders when $k$ is `reasonably low' relative to the number of valuations $x + 1$. More formally, we can say the following:

\begin{proposition}\label{prop4}
    Fix a set of types $\mathbb{K} = \{1, 2, ... \bar{k} \} $, let $p = 1/n$, and define
    $$v^*(k) = \frac{k-1}{1-\left(1-p\right)^{n-1}}$$
    Then if $x$ is sufficiently large, every level-$k$ type in $\mathbb{K}$ bids $\beta^k(v) = k - 1$ for $v > v^*(k)$ and $\beta^k(v) = 0$ otherwise.
\end{proposition}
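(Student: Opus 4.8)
The plan is to prove the claim by induction on $k$, carried out simultaneously for all $k$ in the finite set $\mathbb{K}=\{1,\dots,\bar k\}$. Write $c_0=(1-p)^{n-1}$, so that $v^*(k)=(k-1)/(1-c_0)$ --- crucially, a constant that does not vary with $x$. Since a level-$k$ bidder's chosen bid is realized only with probability $p>0$, the expected payoff from any bid $b$ equals $p$ times the payoff conditional on that bid going through, and I work throughout with this conditional payoff. The base case $k=1$ is immediate: with $p=1/n$ the corner-solution argument already given yields $\beta^1\equiv 0$, and since $v^*(1)=0$ this is exactly the claimed strategy (every $v\geq 1$ is assigned the bid $k-1=0$ anyway); no size restriction on $x$ is needed here.

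For the inductive step, assume the claim for level $k-1$. Then a level-$(k-1)$ opponent, whose value is uniform on $\mathbb{X}$, bids $k-2$ with probability $q_{k-1}:=\#\{v\in\mathbb{X}:v>v^*(k-1)\}/(x+1)$ and bids $0$ otherwise; equivalently, from a level-$k$ bidder's vantage each opponent's bid is void with probability $1-p$, is an effective $0$ with probability $p(1-q_{k-1})$, and is an effective $k-2$ with probability $pq_{k-1}$. I then enumerate the candidate best responses. A bid $b\geq k$ still beats every effective opponent bid but costs more than $b=k-1$, so it is dominated. A bid $b=k-1$ wins with certainty conditional on going through, giving $f_{k-1}(v):=v-(k-1)$. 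A bid $b\in\{1,\dots,k-2\}$ (an empty set when $k=2$, in which case no condition on $x$ is needed) beats an effective $0$ but not an effective $k-2$, so it wins with probability $c_1:=(1-pq_{k-1})^{n-1}$; the best such bid is $b=1$, worth $f_1(v):=(v-1)c_1$. A bid $b=0$ wins only when every opponent is void, worth $f_0(v):=vc_0$. The level-$k$ best response therefore maximizes $\max\{f_0(v),f_1(v),f_{k-1}(v)\}$, ties broken toward the lowest bid.

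A one-line computation gives $f_0(v)\geq f_{k-1}(v)$ exactly when $v\leq v^*(k)$, which settles the choice between $0$ and $k-1$ and reproduces the claimed threshold. It remains to show the intermediate bid is never strictly optimal, i.e. $\max\{f_0(v),f_{k-1}(v)\}\geq f_1(v)$ for all $v$ --- and this is where ``$x$ sufficiently large'' enters. Since $v^*(k-1)$ is a fixed constant, $q_{k-1}\to 1$, hence $c_1\to c_0$, as $x\to\infty$; and once $x$ exceeds $v^*(k-1)$ one has $c_0\leq c_1<1$. Now $f_0(v)-f_1(v)=c_1-v(c_1-c_0)$ is non-increasing in $v$, so on $[0,v^*(k)]$ it is minimized at $v=v^*(k)$; and $f_{k-1}(v)-f_1(v)=v(1-c_1)-(k-1)+c_1$ is increasing in $v$, so on $[v^*(k),\infty)$ it is also minimized at $v=v^*(k)$. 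At $v=v^*(k)$ the two differences coincide (both equal $f_0(v^*(k))-f_1(v^*(k))$, since $f_0(v^*(k))=f_{k-1}(v^*(k))$), and their common value converges to $c_0>0$ as $x\to\infty$; hence there is a finite $X(k,n)$ past which both are strictly positive. Taking $x\geq\max_{k\in\mathbb{K}}X(k,n)$ --- a finite bound --- makes the whole induction go through. Finally, if $v^*(k)$ happens to be an integer, at $v=v^*(k)$ we have $f_0=f_{k-1}>f_1$, so the tie-breaking rule selects bid $0$, consistent with the strict inequality ``$v>v^*(k)$'' in the statement.

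The step I expect to be the main obstacle is this last one --- ruling out the intermediate bids --- since it is exactly what forces the hypothesis ``$x$ sufficiently large'': a small-$x$ example (e.g. $n=3$, $x=3$, a level-$3$ bidder with value $3$) shows that bidding $1$ can strictly beat both $0$ and $k-1$, so the proposition genuinely fails without the size condition. The subtle part is that the needed threshold on $x$ can be taken uniform over $\mathbb{K}$, which works only because each $v^*(k)$ is independent of $x$. Everything else (the base case, the enumeration of candidate bids, and the comparison between $f_0$ and $f_{k-1}$) is routine.
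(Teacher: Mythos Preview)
Your proof is correct and follows essentially the same inductive route as the paper: establish $\beta^1\equiv 0$, then in the inductive step reduce the level-$k$ problem to the three candidate bids $\{0,1,k-1\}$ and use large $x$ to rule out the intermediate bid. Your treatment of that last step is in fact more careful than the paper's---you show $\max\{f_0,f_{k-1}\}\geq f_1$ via a monotonicity argument at the crossing point $v^*(k)$, which handles all $v\in\mathbb{X}$ uniformly, whereas the paper only states the pointwise limit $\pi^k(v,0)-\pi^k(v,1)\to (1-p)^{n-1}$ (with a sign typo) and leaves the uniformity in $v$ implicit.
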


Given the parameters ultimately chosen for our experiment ($n=2$, $p=1/2$, $x = 100$), level-$k$ behaviour once again cycles at very high levels. To illustrate this, Figure \ref{cycling} plots the maximum bid submitted by the first $100$ levels (the full level-$k$ bidding functions are available in the supplementary materials). As can be seen, the model produces cycles of length $29$ (so level-$7$ coincides with level-$36$, which in turn coincides with level $65$, etc.) Moreover, far from cycling locally around the equilibrium predictions, level-$k$ cycles are quite dramatic: the maximum bid ranges from 0 to 20. This is all the more surprising once we realise that all bids above 50 are strictly dominated (by bidding zero), so the highest level-$k$ bid traces out around 40\% of the non-dominated strategy space.

\vspace{-0.6em}

\begin{figure}[H]
    \centering
    \caption{Level-$k$ cycling}
    \vspace{-0.5em}
    \hspace{1em}
    \includegraphics[width=14cm]{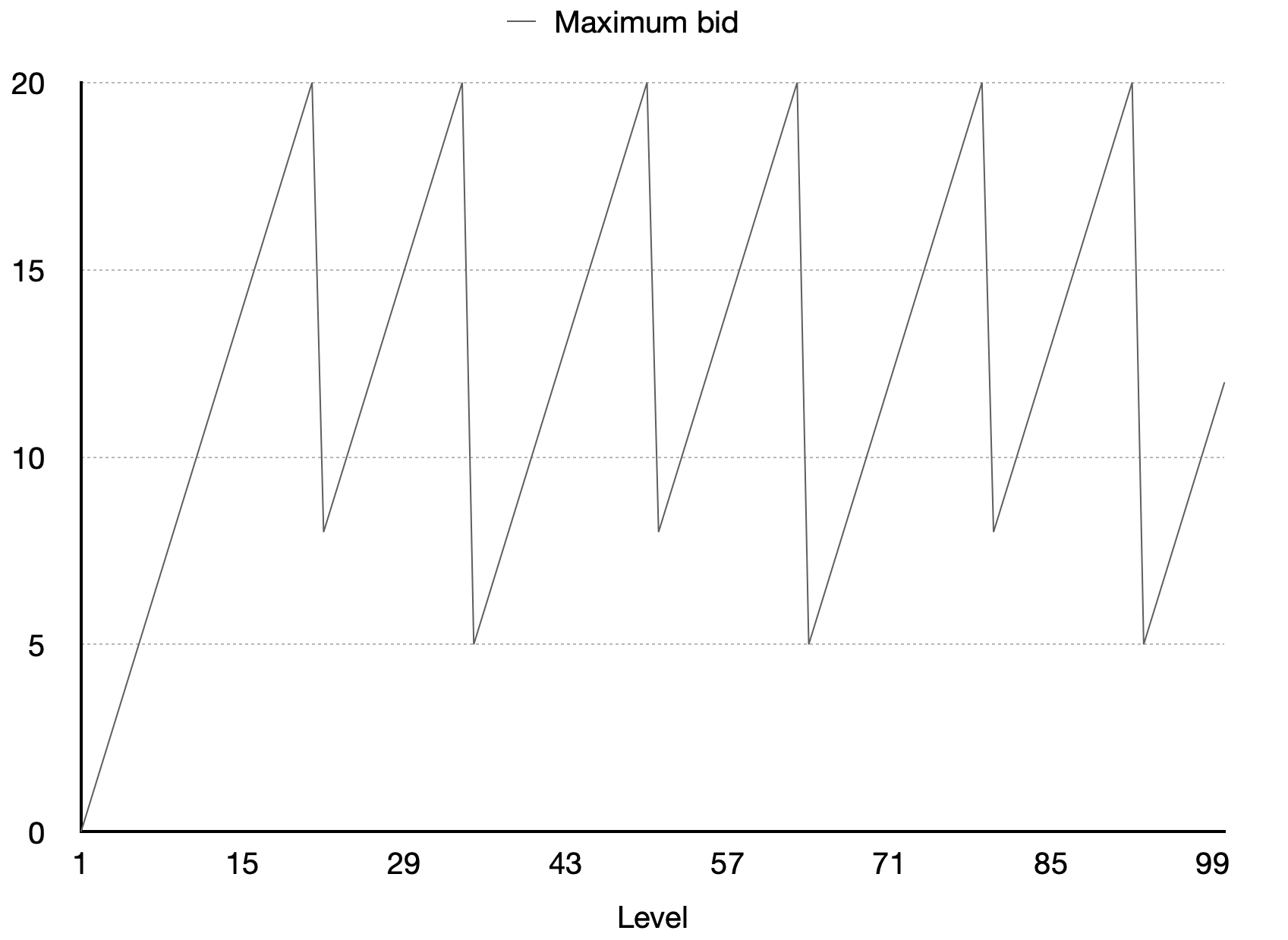}
    \label{cycling}
\end{figure}

Finally, Figure \ref{fig4} compares the models in the case of $n = 2$ and $p = 1/2$. As can be seen, we have once again managed to disentangle the theories -- and in a dramatic way. First, given any plausible calibration of the levels -- recall that these are supposed to be in the 1--3 region -- level-\textit{k} bids are significantly lower than equilibrium bids. Second, while equilibrium bids are increasing in values, level-\textit{k} bids are almost always non-increasing. Finally, the non-convergence of level-$k$ to equilibrium means that the models remain separated even if one were to ascribe implausibly high levels to the experimental participants.

\begin{figure}[H]
    \centering
    \caption{Comparing level-$k$ with equilibrium}
    \vspace{-0.5em}
    \includegraphics[width=14cm]{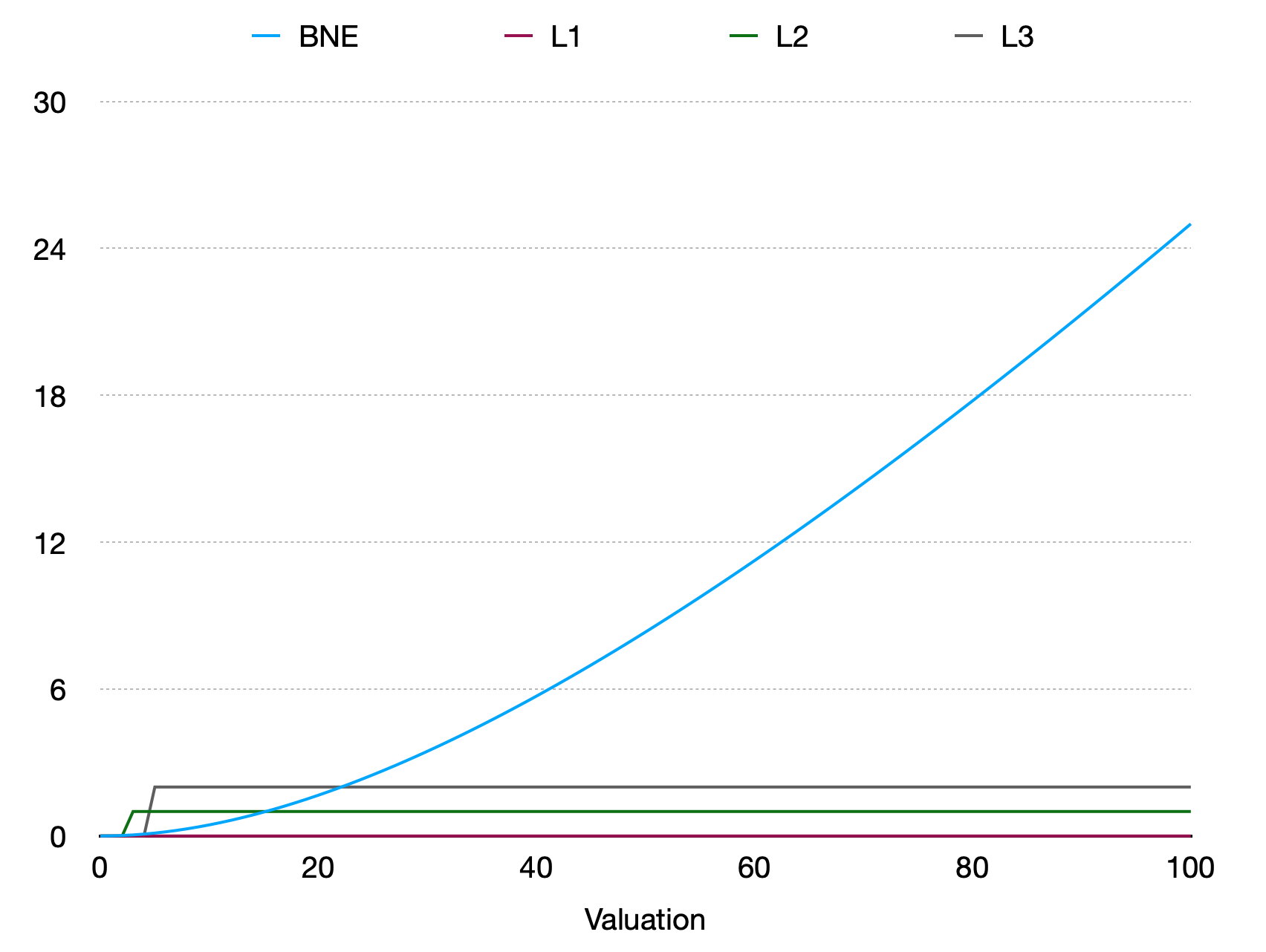}
    \label{fig4}
\end{figure}

\section{Experimental design} \label{experimental_design}

To test the insights of the previous section, we conducted an experiment with 84 participants in collaboration with the Nuffield Centre for Experimental Social Sciences. All participants were students at the University of Oxford and no participant took part in more than one session. The experiment was conducted online and written in oTree \citep{otree}.\footnote{The full experimental instructions can be viewed here: \color{blue}{\url{https://auction-experiment.herokuapp.com/}}}

The essential idea of the experiment is straightforward -- to implement the two aforementioned auction structures in a virtual laboratory and ask subjects how much they wanted to bid. It is then a relatively straightforward matter to check which of the two theories under consideration can better predict (and rationalise) the observed bidding behaviour.

More precisely, each subject in the experiment participated in a first-price auction (as described previously) with $n=2$, $p=1/2$ and an all-pay auction (as described in the previous section) with $n=2$. Following the auctions, subjects played a modified version of \citet{arad2012}’s 11-20 game that is proposed in \cite{alaoui2016}. The purpose of this game was to allow for a rough calibration of each participant’s level.\footnote{We use \cite{alaoui2016}’s modification of the original game since it prevents cycling (level-$k$ behavior coincides for all $k \geq 9$). As \cite{alaoui2016} note, this is then useful if one wishes to calibrate individual levels without the potentially controversial assumption that each individual’s level is the smallest possible level that rationalises their observed behaviour.} Finally, we measured each subject's risk aversion using the `bomb' risk elicitation task \citep{crosetto2013}.

Before participating in either of the auctions, subjects were given extensive quizzes to check their understanding of the rules. They were not permitted to proceed past the quiz until they had answered a full set of questions correctly. We view this as an important feature of the study: subjects are unlikely to make meaningful choices if they cannot even understand the rules of the game that they are playing. 

Each subject bid in two rounds of each of the auction types.\footnote{They were also asked to explain the general considerations determining how they chose to bid after every second round.} Moreover, we did not inform subjects of the bids submitted by their opponents in previous rounds. We made these two choices -- only two rounds coupled with very little feedback -- since the level-$k$ model is supposed to be a model of initial play, as opposed to a model of the outcomes of a long-run learning process \citep{crawford2007}.\footnote{Indeed, it is for this reason that \citet{crawford2007} only test their model using data from the first five rounds of the experiments they examine.} In other words, we sought to test the level-$k$ model within its intended domain of applicability.

Subjects were informed (and regularly reminded) that they would never bid against the same opponent twice across the rounds of an auction. That is, we used the `perfect stranger matching' protocol. We opted for this protocol since the two theories under consideration are static theories -- in the sense that they treat the game as a one-shot interaction -- and we did not want to complicate the analysis by introducing learning or reputational considerations. 

Bidding was incentivised through a procedure similar to that employed in \citet{ozbay2007}. In every round, individuals were asked how much they would want to bid given ten possible valuations that they might have. They were informed that they should bid carefully since one of these valuations would be their actual valuation and they would be committed to bidding the amount they had entered. In other words, one in ten bids were selected to ‘count’. We opted for this procedure, as opposed to incentivising every bid, since it allowed us to collect more data given a fixed number of rounds, thereby further reducing feedback or learning effects (and making the environment more hospitable to the level-$k$ model).

Finally, we randomly allocated the participants into two treatments: an ‘integer bid’ treatment (to which 2/3 of participants are allocated), and a treatment in which bids were constrained to be multiples of 5. The idea here was to check whether varying the bid treatment has the strong effects predicted by level-$k$ theory or the relatively weak effects predicted by the equilibrium model. In the level-$k$ model, increasing the discretisation by a factor of 5 increases the predicted bids by roughly a factor of 5, assuming that levels are sufficiently low for individual behavior to be characterised by Propositions \ref{prop2} and \ref{prop4} (see the supplementary materials for an exact description of the effects of the bid discretisation on the model’s predictions). In contrast, varying the bid discretisation does not have large quantitative effects on equilibrium bids: for example in the first-price auction the average equilibrium bid increases by just 4.7\% (again, see the supplementary materials).

The experiment was pre-registered along with a full analysis plan.\footnote{See \textcolor{blue}{\url{https://www.socialscienceregistry.org/trials/8011}}.} The experiment also received Ethics Approval from both the University of Oxford’s and Centre for Experimental Social Sciences' institutional review boards.

\section{Results} \label{results}

\subsection{Pooling the data}\label{pooling}

We begin by asking which theory best explains the overall patterns observed in the bidding data. To this end, we pool all the valuations and bids, compute the mean observed bid for every valuation and plot this against the corresponding predictions of the two models. For the moment, we assume that levels cannot exceed three\footnote{We impose this constraint because all of the evidence of which we are aware suggests that insofar as individuals use level-$k$ reasoning, it rarely goes beyond the third level. Indeed, this is what we find when examining the responses of our subjects in the 11-20 game: out of those who mentioned level-$k$ reasoning when asked to explain their choices, not one made a choice consistent with more than three levels of reasoning.} and drop the handful of subjects who submitted at least one dominated bid in one of the auctions.

Figure \ref{t1fp} displays the results for the case of the first price auction with integer bids (see also Figure \ref{t2fp} for the treatment in which bids are multiples of 5). As found in many prior studies of first-price auctions, we observe substantial overbidding relative to the (risk-neutral) equilibrium benchmark \citep{kagel1995auctions}. Indeed, the average bid is around 46\% of the average value, whereas equilibrium predicts that it should be around 20\%. However, while the equilibrium model struggles to match the data, it is clear that the level-$k$ model performs considerably worse, predicting bids that are at least an order of magnitude lower than those commonly observed. As a result, it is clear that equilibrium provides a substantially better (albeit highly imperfect) description of the average bidding function observed in the experimental data.

\begin{figure}[H]
    \centering
    \caption{Predictions and data in the first-price auction}
    \vspace{-0.2em}
    \includegraphics[width=14cm]{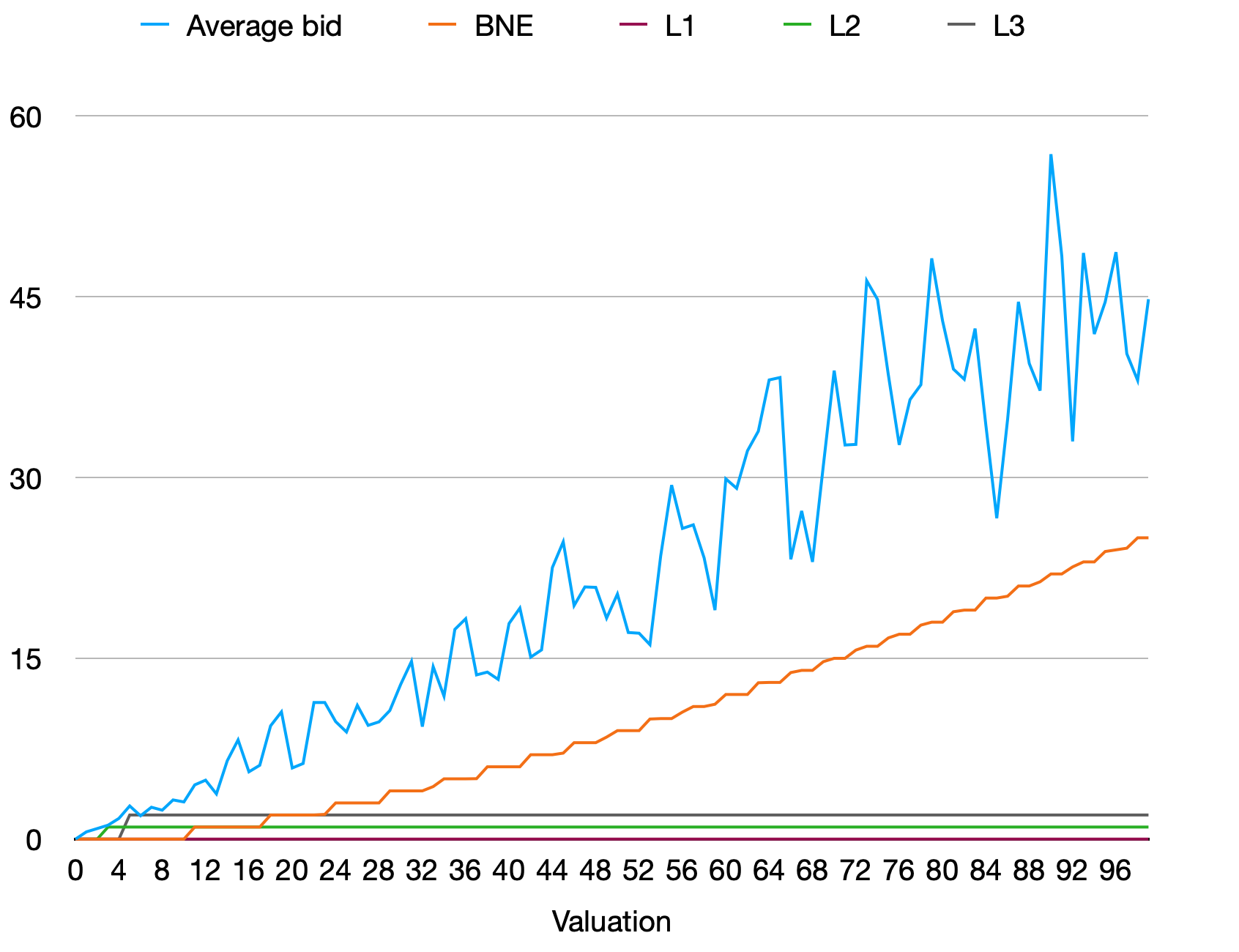}
    \label{t1fp}
\end{figure}

We then conduct a similar analysis of the all-pay auction. As shown in Figure \ref{t1ap}, equilibrium does a somewhat better job of tracking mean bids in this environment (see also Figure \ref{t2ap} for the case where bids are multiples of 5). However, over-bidding relative to equilibrium is again observed on average at low to medium values.\footnote{The overbidding that we observe is consistent with results from the only other experimental study of IPV all-pay auctions of which we are aware \citep{noussair2006behavior}. Unfortunately, however, we were unable to obtain the data from this experiment and so were unable to use it to complement our analyses.} As before, the level-$k$ model predicts bids that are orders of magnitude below those that are observed on average, at least if one imposes the restriction that levels cannot exceed 3. Thus, our initial examination of the data suggests a very clear ranking of the two models under consideration.

\begin{figure}[H]
    \centering
    \caption{Predictions and data in the all-pay auction}
    \vspace{-0.2em}
    \includegraphics[width=14cm]{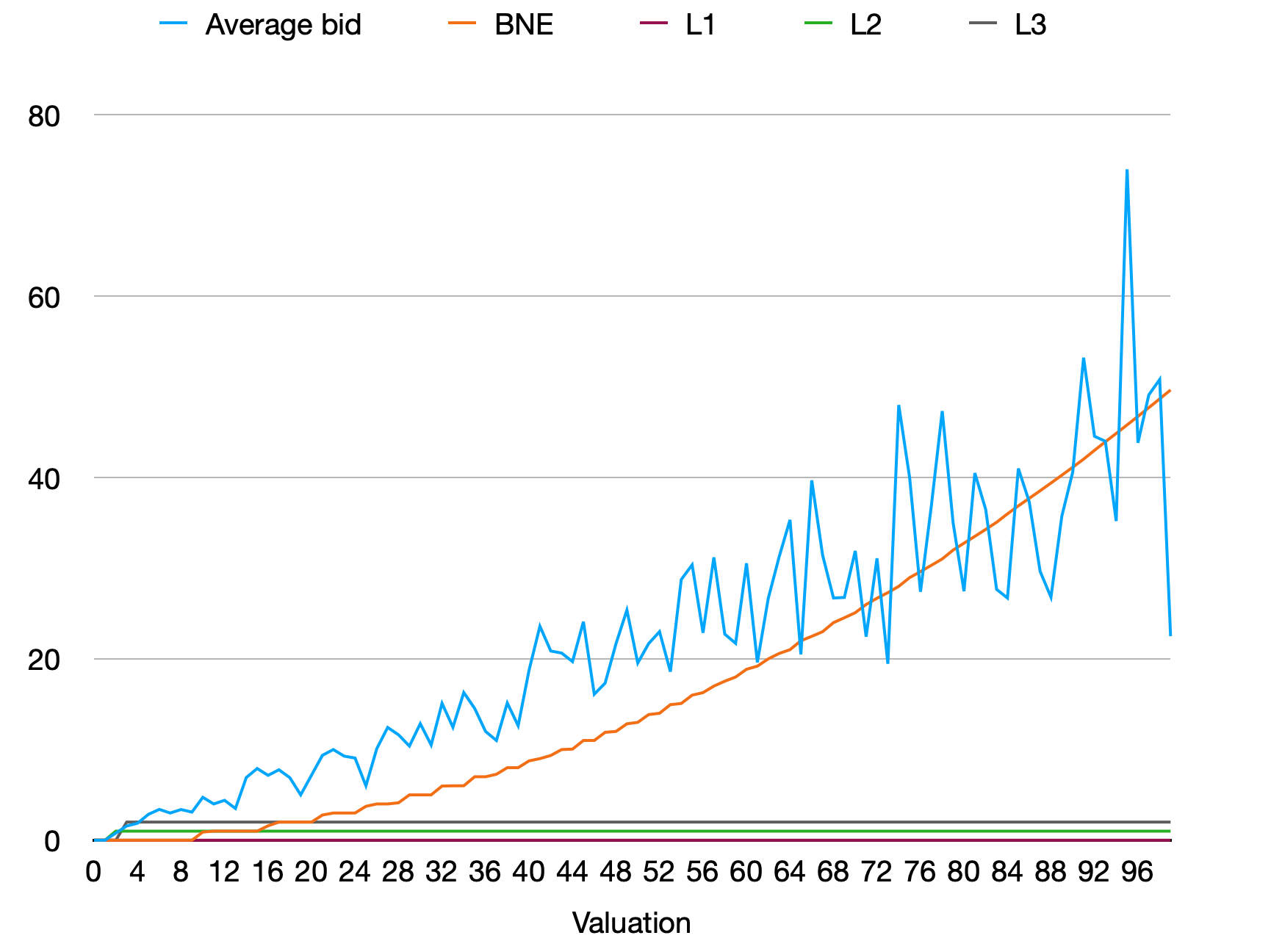}
    \label{t1ap}
\end{figure}

To verify these conclusions more formally, we also compute the root-mean-square prediction errors of both of the theories. In the case of equilibrium, this is straightforward since the model is `parameter free'. In the case of level-$k$, we obtain a prediction by assigning each datapoint the level from the $1-3$ range that minimises the model's prediction error (this is a generous assumption that will tend to overstate the model's predictive performance). Table \ref{errors} displays the results. As can be seen, both models make size-able prediction errors. Consistent with the graphical evidence, however, the prediction errors of level-$k$ are substantially greater than those of equilibrium.

\begin{table}[H]
\begin{threeparttable}
\caption{Prediction errors}
\label{errors}
\begin{tabular}{l|cccc}
\hline
\textbf{} & T1 FP & T1 AP & T2 FP & T2 AP \\ \hline
Equilibrium       & 15.1  & 13.3  & 14.1  & 14.3  \\
Level-$k$ & 21.3  & 19.8  & 21.4  & 24.0  \\ \hline
\end{tabular}
\begin{tablenotes}
\footnotesize
\item \hspace{-0.45em}\textit{Notes}: T1 and T2 denote the integer bid and multiples of five treatments; FP and AP abbreviate ‘first-price’ and ‘all-pay’.
\end{tablenotes}
\end{threeparttable}
\end{table}

Next, we examine the effect of varying the bid discretisation. Recall that the level-$k$ model predicts a large effect from changing the discretisation from integers to multiples of 5: bids should increase by roughly a factor of 5. In contrast, the equilibrium model predicts much  smaller effects. Given these very different predictions, we now conduct a comparison of the integer bid and multiples of 5 bid treatments. 

As a preliminary, we confirm that the treatments are balanced, confirming that the randomisation worked as expected (see Table \ref{balance}). Next, we inspect the average bids across the two treatments and auction structures (see Table \ref{average_bids}). As can be seen, changing the bid discretisation makes essentially no difference to the first-price auction (in both cases, the average bid is about 46\% of the average valuation). In the all-pay auction, restricting bids to multiples of five does appear to slightly increase the average bids (which rise from 45\% to 50\% of the average valuation), but clearly not in the dramatic way predicted by the level-$k$ model. In neither auction, then, do the data support the surprising comparative static prediction of level-$k$ theory.

\subsection{Structural estimates}

In light of the evidence above, it is clear that the level-$k$ model cannot count for observed average behaviour, at least when calibrated with a plausible distribution of types. However, this still leaves at least two issues unresolved. First, even if level-$k$ cannot describe the average bidder, it is still possible that it accounts for a substantial minority of the observed behaviour. Second, it remains to be seen whether the model might be rescued by allowing for a greater number of possible types. We now address both of these issues.

To do this, we follow the literature in estimating a mixture of types model (see, e.g., \cite{stahl1994experimental}, \cite{bosch2002one} and \cite{crawford2007}). To obtain stochastic versions of the relevant theories, we assume that subjects submit bids with probabilities that are proportional to a normal distribution whose mean is the predicted bid that they are `supposed' to submit (according to the relevant theory). More formally, let $\mathbb{K} \subset \mathbb{N}^+$ denote the set of types, indexing equilibrium by $k=0$ and level-$k$ types in the obvious way. Given that a player $i$ has value $v_i \in \mathbb{X}$ and type $k \in \mathbb{K}$, we assume that the chance they submit a bid $b_i \in \mathbb{X}$ is given by:

\begin{equation}\label{normal_noise}
\mathbb{P}(b_i|v_i, k) \propto \exp \left[ -\frac{1}{2}\left(\frac{b_i - b_p}{\sigma_k}\right)^2 \right]
\end{equation}

where $b_p$ is the bid predicted by their type (given their valuation), $\sigma_k$ is a noise parameter that controls the variance of their bids, and the constant of proportionality is chosen to ensure that the probabilities sum to $1$.\footnote{We also experimented with a model containing logit errors. However, these had the unfortunate tendency of preferring level-$1$ to all higher levels even though level-$1$ predictions were the furthest from the observed bids (recall that level-$1$ types bid zero regardless of their valuation).} This produces a single peaked distribution whose mode is the predicted bid $b_p$. As $\sigma_k \rightarrow \infty$, this distribution converges to uniform noise; and as $\sigma_k \rightarrow 0$, it converges to a degenerate distribution which places all probability mass on the predicted bid $b_p$.

For now, assume that each individual's type is fixed (though we relax this in many of our analyses). Then the probability that an individual submits a vector of (twenty) bids $\boldsymbol{b_i}$ given their vector of values $\boldsymbol{v_i}$ is (assuming independence of errors):
\begin{equation}\label{bid_vector}
\mathbb{P}(\boldsymbol{b_i}|\boldsymbol{v_i}, k) = \prod_{i=1}^{20}\mathbb{P}(b_i|v_i, k)
\end{equation}
We assume that the individual we observe is drawn randomly from the population; and write $p_k \in [0, 1]$ to denote the population proportion of type $k$ (obviously, $\sum_k p_k = 1)$. By the law of total probability, the unconditional probability of their bids is
\begin{equation}\label{uncond}
\mathbb{P}(\boldsymbol{b_i}|\boldsymbol{v_i}) = \sum_{k \in \mathbb{K}} \mathbb{P}(\boldsymbol{b_i}|\boldsymbol{v_i}, k)p_k
\end{equation}
and so the `likelihood' of all the observed decisions is (combining Equations \ref{normal_noise} - \ref{uncond})
\begin{equation}\label{likelihood}
L = \prod_{i=1}^n \mathbb{P}(\boldsymbol{b_i}|\boldsymbol{v_i}) \propto 
\prod_{i=1}^n
\left(
\sum_{k \in \mathbb{K}}
\left[
\prod_{i=1}^{20}
\left(
\exp\left[ -\frac{1}{2}\left(\frac{b_i - b_p}{\sigma_k}\right)^2 \right]
\right)
\right]
p_k
\right)
\end{equation}
In every estimation exercise, we compute the parameters $\boldsymbol{p} = (p_k)_{k \in \mathbb{K}}$ and $\boldsymbol{\sigma} = (\sigma_k)_{k \in \mathbb{K}}$ that maximise the likelihood (\ref{likelihood}).  We then calculate standard errors with the jack-knife. While we consider the case where individual levels are fixed (as above), we also consider the case where individual levels are permitted to vary by round -- although we do insist throughout that levels are fixed within a round.\footnote{Recall that all bids within a round must be made simultaneously, so there is no possibility for learning within rounds. In contrast, individuals could conceivably learn between rounds -- although the scope for such learning is limited by the near-total absence of feedback.}

As a preliminary exercise, we compare an equilibrium only model with the level-$k$ model in which levels are restricted from $1-3$. To avoid introducing too great an imbalance in the number of parameters between the two respective models, we restrict both to a single noise parameter $\sigma$. Table \ref{horse_race} displays the results. As one might have expected from Figures \ref{t1fp} - \ref{t1ap}, equilibrium does a better job than any of the level-$k$ models at explaining the data, attaining a higher log-likelihood in all specifications. This advantage becomes even greater if we punish the level-$k$ models for their extra flexibility; for example, using the Bayesian information criterion. This simply reinforces the conclusion that we reached earlier: equilibrium does a better job than the level-$k$ model at accounting for the overall patterns in the data.

We now examine the extent to which the level-$k$ model might be salvaged by adding higher levels. To this end, we now estimate a series of models, gradually adding levels until we reach the point where the model cycles. Three main points emerge from this exercise.\footnote{Since the resulting tables are too large to be contained within the appendix, they have been made available \href{https://docs.google.com/spreadsheets/d/1qNrerSNlzhqOSkyISZL29tgxcjxqW_3wO3refseew3o/edit}{\textcolor{blue}{here}}.}
First, while the level-$k$ model can achieve a higher likelihood than equilibrium once the full suite of levels is added, it can do this only by placing most probability on very high levels: for example, in the first price auction, most probability mass is placed on level-$32$. As argued previously, is doubtful whether we can plausibly ascribe so many thinking steps to participants. Second, the level-$k$ model fits the data with a very unusual distribution, placing some probability mass on level-1, none on intermediate levels, and the rest on the very high levels just discussed. Of course, the idea that individuals do either one thinking step or 32 thinking steps might appear to be rather strange. Finally, the inclusion of these very high levels is almost never recommended by the Bayes-Information Criterion. For all these reasons, it seems difficult to rescue the level-$k$ model through a strategy of adding a very large number of levels.

Although a pure level-$k$ model appears to be outperformed by equilibrium, a final question we investigate is whether a hybrid model might be able to outperform both. To this end, we now re-estimate the model for all auction structures, allowing for both equilibrium and level-$k$ types and no longer restricting the model to a single noise parameter $\sigma$. We also return to our previous assumption that levels are in the plausible range ($1$-$3$). Table \ref{hybrid} displays the results. As can be seen, most probability mass is placed on equilibrium -- suggesting again that the level-$k$ model fits the data relatively poorly. However, there is still some probability mass placed on level-$k$ types. Thus, the results from the hybrid model are consistent with the possibility that the level-$k$ model characterises the behaviour of a small minority of subjects, even if it struggles to capture the general patterns in the bidding data.

\subsection{Correlating levels}

We now examine whether individual levels as estimated from the two auctions bear any resemblance to the levels as inferred from a game known to trigger level-$k$ reasoning: a variant on the 11-20 game proposed by \cite{alaoui2016}. To this end, we calculate the level $k$ and precision $\sigma_k$ that maximises the likelihood of each individual's bids (i.e. Equation \ref{bid_vector}). Thus, we now assign each individual a definite level, as opposed to a distribution of levels; and no longer allow for equilibrium types. After doing this for both the first-price and all-pay auction, we then compute each individual's level as inferred from the 11-20 game, discarding those whose implied level exceeds 4.\footnote{We do this partly because such levels are at odds with prior experimental evidence but also because not one of such subjects invoke level-$k$ reasoning when asked to explain their choice.} Finally, we compute the correlation between the levels inferred from each auction and the level as inferred from the 11-20 game.

Figure \ref{correlate} plots the levels estimated from the first-price and all-pay auctions against the levels as inferred from the 11-20 game. As can be seen, there is no discernible relationship between the levels; and indeed the correlations are estimated to be $0.00$ and $-0.09$ in the case of the first-price and all-pay auctions respectively.\footnote{This result reflects those of \citet{georganas2015}, who find no correlation between estimated individual levels across two families of games (albeit in a non-auction setting).} From the perspective of level-$k$ theory, this is rather unexpected. For if levels reflect either cognitive sophistication or beliefs about the cognitive sophistication of one’s opponents -- the two leading interpretations of individual levels -- then one would expect to find a positive correlation between individual levels across games. Conversely, the absence of such a positive correlation suggests that individuals are not using level-$k$ reasoning in at least one of the games.

\subsection{Subject reports}\label{reasoning}

Finally, we examine the explanations given by participants as to why they chose to bid in the way that they did.\footnote{Historically, there seems to have been a widespread belief in economics that one cannot learn anything from individuals from asking them why they behaved in the way they did (see e.g. Becker ??). As far as we can see, however, this belief was not adopted on the basis of any kind of empirical evidence. More to the point, it is obviously false: for example, macroeconomists appear to learn a great deal from unemployment statistics, which in turn are almost always based on individual self-reports \citep{thaler2015}.}  In particular, we consider whether they appealed to iterated reasoning.\footnote{A strict definition of iterated reasoning would be reports of the form: ``I bid in way X because I expected my opponent to bid in way Y; and I expected my opponent to bid in way Y because I thought they would think I would bid in way Z." In other words, this would need to involve (1) a conjecture about the bidding strategies followed by the player's opponents (2) an attempt to ground this conjecture using a conjecture about the conjectures of the opponents about the player's bidding strategy. On a weaker definition, we would simply require some kind of conjecture about the opponents' bidding strategy (along with an attempt to optimise given this conjecture). Since the level-$k$ model allows for level-$1$ players, we used this weaker (and more permissive) definition when categorising the reports.} Assuming that subjects’ conscious reasoning was indeed captured by the level-$k$ model and further assuming that subjects did not for some reason wish to lie to their experimenter about their reasoning, one would expect to see iterated reasoning cited in their explanations. Conversely, one can take the absence of such reports as evidence that the level-$k$ model did not in fact characterise their conscious reasoning.

In their explanations, the majority of subjects (around 55\%) explicitly indicated an awareness of the basic bidding trade-off: bidding more increases one’s probability of winning, but decreases one’s pay-off if one wins the auction. However, the vast majority of subjects did not appear to use iterated reasoning. Indeed, we categorise responses as invoking iterated reasoning in just two cases.\footnote{All categorisations were independently checked by another researcher who was not involved in the study. The reader is invited to examine the responses themselves which are provided (along with all other non-demographic data generated by the experiment) in the supplementary materials.}

To be clear, we are not claiming that the level-$k$ model did not characterise the reasoning given by any of the subjects. For example, one subject (who also bid in line with the model) wrote that
\begin{myquote}
Since my opponent has a 50\% chance of their bid being cancelled, it's better to keep bids low in hopes theirs [sic] gets cancelled but mine does not.  I raised the bid to more than 1 because in the event my opponent is using the same strategy and bids very low, I still have a chance of winning the auction.
\end{myquote}

Thus, this subject not only considered the level-$1$ strategy (bidding zero) but then went beyond it -- a clear case of level-$k$ reasoning.\footnote{While the bids submitted by the subjects did not exactly conform to any of the levels they were very close, with the subjects bidding a mix of 2's and 3's at relatively low valuations.} What is evident, however, is that reasoning was provided by only a small minority of subjects, a finding that is also in line with the choice data outlined in the previous section.

\section{Robustness checks}\label{robustness_checks}

We now report on the results of a variety of extensions and robustness checks.

\textbf{Risk aversion}. While the previous analysis assumed that individuals are risk-neutral, we now consider whether the level-$k$ model might be salvaged by allowing individuals to be risk-averse.\footnote{Of course, expected utility maximisers should be approximately risk-neutral over such small stakes \citep{arrow1970, rabin}. Thus, the `risk aversion' discussed here should really be thought of as capturing utility curvature within a reference dependent model \citep{bleichrodt2019}.} To begin, consider the all-pay auction. As noted in Section \ref{theory}, level-1 players (if assumed to be risk-neutral) are predicted to bid 0 (for all values), thereby earning a certain payoff of zero. If they submitted a positive bid, then they would be exposed to risk: either their pay-off would be positive (if they win the auction) or negative (if they lose the auction). Now, given that level-1 players choose to bid 0 when they are risk neutral, it is obvious that they must also choose to bid 0 if they are assumed to be risk averse. Thus, introducing risk aversion makes no difference to the model’s predictions -- and also (for this reason) makes no difference to the predictions for high levels (provided that they are sufficiently low to be characterised by bidding strategy $\beta_{LK}$).

While risk aversion cannot improve the model’s performance in the all-pay auction, matters are more complicated in the first-price auction. In general, holding the bidding strategies of one’s opponents fixed, introducing risk aversion must (weakly) increase one’s optimal bid. In our setting, this means that introducing risk aversion can make positive bidding by a level-1 type optimal, thereby bringing the model’s predictions closer in line with the observed data. Note, however, that introducing risk aversion \textit{also} makes equilibrium bids more aggressive \citep{holt1980competitive}, which might be expected to increase the predicted accuracy of our benchmark. It is therefore unclear at the onset whether introducing risk aversion improves the relative performance of the level-$k$ model.

To study this issue more carefully, we assume that every individual $i$ has constant relative risk aversion (i.e. $u_i(x) = x^\alpha_i$ where $x_i$ is the individual's pay-off and $\alpha_i > 0$ determines their level of risk aversion.) We then estimate every individual’s parameter $\alpha_i$ using their choice in the ``bomb'' risk elicitation task \citep{crosetto2013}. We then re-calculate the level-$k$ model’s predictions for every player using their estimated parameter $\alpha_i$ (along with their estimated level as in Section \ref{pooling}). We also re-calculate the equilibrium predictions on the assumption that every individual has the average estimated risk aversion parameter of the subject pool ($\alpha =0.72$).\footnote{The code can be viewed here: \textcolor{blue}{\url{https://github.com/Itzhak95/risk_aversion/tree/master}}. The $\alpha =0.72$ is obtained after dropping the two outliers who collected $61$ or more out of the $62$ boxes. If such subjects are included, we obtain $\alpha  = 1.8$ (risk-seeking), in which case allowing non-linear utility is unable to improve the performance of the level-$k$ model.}


Figure \ref{risk_integers} displays the results for the main treatment (see also \ref{risk_fives} for the case where bids are multiples of five). As can be seen, both the level-$k$ and equilibrium predictions are now somewhat improved and considerably closer to the actual data. Overall, however, it is clear that equilibrium still outperforms level-$k$ dramatically -- a conclusion confirmed by the prediction error rates reported in Table \ref{robustness}.

\begin{figure}[h!]
    \centering
    \caption{Risk aversion in the first-price auction}
    \vspace{-0.2em}
    \includegraphics[width=14cm]{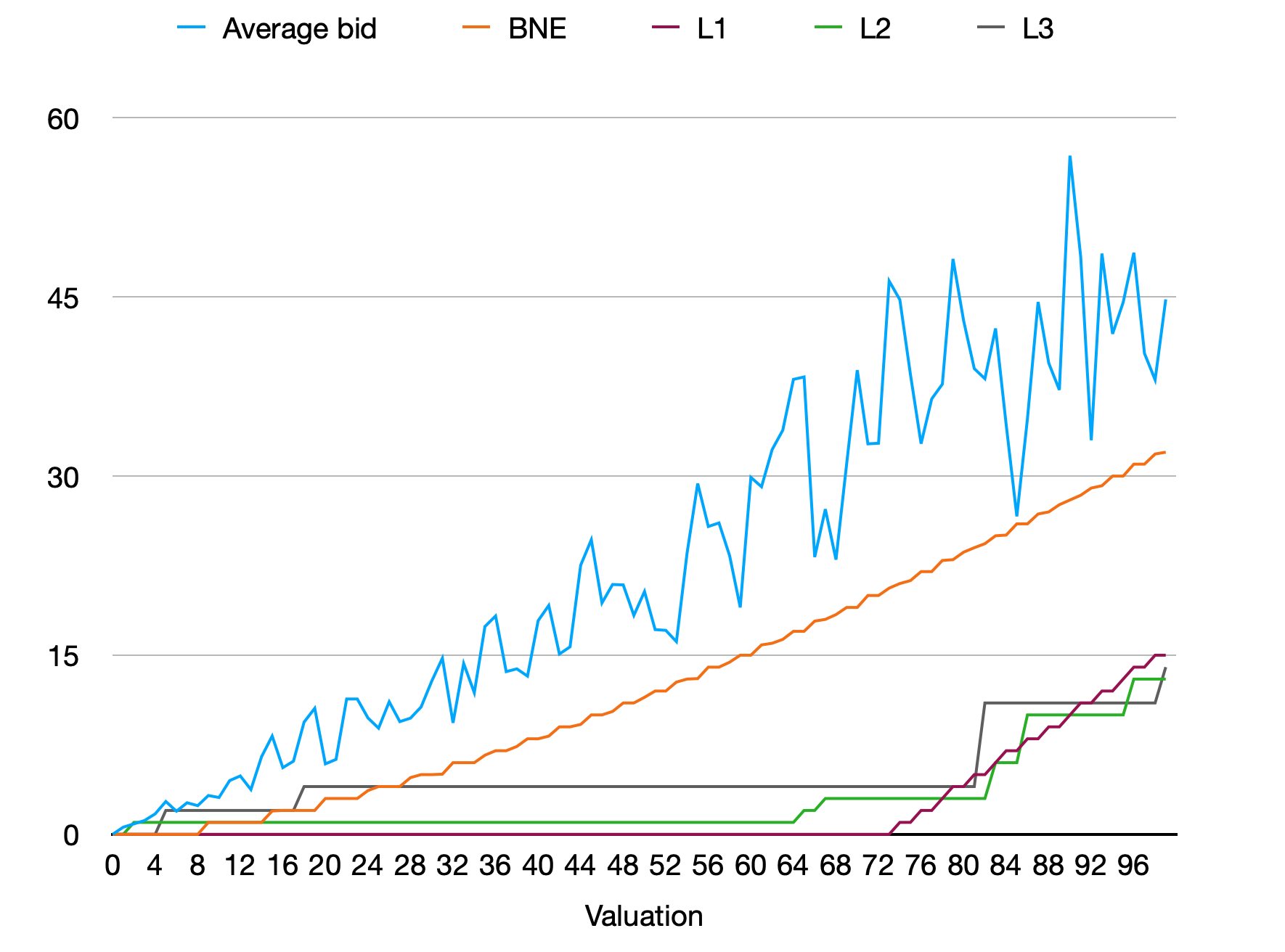}
    \label{risk_integers}
\end{figure}

\textbf{Cognitive hierarchy.} We now ask if any conclusions might change if individuals are assumed to best respond to a \textit{distribution} over levels, as in \citet{camerer2004cognitive}’s cognitive hierarchy model. To this end, we assume (following \citet{camerer2004cognitive}) that levels are Poisson distributed with mean $\tau$. Using standard maximum likelihood methods, we then estimate $\tau$ for the participants in our experiments using their choices in the 11-20 game, finding that $\tau \approx 1.9$. Finally, we compute the predictive errors of the cognitive hierarchy model and compare those with the associated errors for equilibrium. As shown in Table \ref{robustness}, moving from level-$k$ to cognitive hierarchy makes essentially no difference to the prediction errors. This is not a surprise: obviously, best responding to a mix of levels between 1 and 3 yields (in our context) very similar optimal bids to best responding to level-2. 

\textbf{Including dominated bids.} As noted earlier, our main analysis excluded the small number of subjects who submitted at least one dominated bid on the grounds that such subjects may have misunderstood the rules of the auction. However, excluding such subjects is a debatable choice; and we re-ran all analyses after including such subjects. This did not change any of our substantive conclusions (see Table \ref{robustness}).

\textbf{Changing the tie-breaking rule.} In the previous analysis, we assumed that level-$k$ players always chose the lowest optimal bid if multiple bids were optimal. While this allowed the model to avoid predicting that individuals choose dominated strategies, it is natural to ask whether any of our conclusions hinge on this assumption. To investigate this, we now make the opposite assumption -- that level-$k$ players always choose the highest bid out of the bids that they believe are optimal.\footnote{Obviously, this does not exhaust the universe of possible tie-breaking rules: one could also assume that players sometimes choose the highest optimal bid and sometimes choose the lowest optimal bid.} Re-running our main analyses, we see that this slightly worsens the model's fit with the data (although the change is small). This is not a surprise: the main change is that the model now predicts some dominated bids, and such bids are very rare in the data.

\textbf{Dropping the second round.} As noted in Section \ref{experimental_design}, the level-$k$ model is supposed to be a model of initial play. In order to test the model in its natural domain, we therefore blunted subject feedback and restricted the number of rounds to two. Nonetheless, one might worry that even two rounds is too many to count as `initial play'. To test this possibility, we re-ran the analyses after dropping the second round. While this does  somewhat improve level-$k$'s performance, the model remains clearly outperformed by equilibrium (see Table \ref{robustness_checks}).

\textbf{Changing the $L_0$ specification}. The level-$k$ auction model, as formulated by \cite{crawford2007}, assumes that level-$0$ players either bid their value (the `truthful' specification) or randomise uniformly over $\mathbb{X}$. Since the predictions of such a model seem strongly disconfirmed by the data, one might wonder whether the model might be improved by choosing an alternative level-0 specification. While our data do not allow us to absolutely rule this possibility out, there are various reasons to think that this is not likely to be a fruitful theoretical exercise.

First, as stressed by \cite{crawford2013}, one must be careful to ensure that any L0 specification represents a `strategically naive' assessment of how others will play the game. This is most obviously important if one wishes to connect the model to its standard interpretation as a model of iterated reasoning given some naive but initially plausible starting point. In addition, it is vital to restrict the range of possible level-$0$ strategies if one wishes the model to make anything like definite predictions in strategic situations; and to be clearly disentangled from rival models. For instance, if an analyst were allowed to specify that level-$0$ play coincided with the symmetric equilibrium, it would be impossible to disentangle the equilibrium from the predictions of the level-$k$ model: for every $k \geq 1$, the level-$k$ prediction would simply be the symmetric equilibrium strategy.

Thus, if one did want to change the $L_0$ specification, one would need to make sure that it represents a psychologically plausible way in which a player might naively expect others to play. This brings us to our second point, however: there simply do not appear to be a large number of such $L_0$ specifications. Other than the specifications proposed by \cite{crawford2007}, we are only able to think of one alternative: bidding uniformly between 0 and one’s value $v$ (modifying their `random' specification so as to remove dominated bids). However, opting for this L0 specification would render the model unable to match even the general trend of bidding in the most experimentally studied of all auction formats -- the first-price auction with independent, uniform values (and no possibility of cancelled bids).\footnote{In this setting, a large number of auction experiments report overbidding relative to the risk-neutral equilibrium $\beta(v) = (n-1)v/n$ (see \cite{kagel1995auctions} for a review). However, it is easy to see the level-$0$ specification suggested here would predict bids that are substantially lower than the risk-neutral equilibrium. As a result, it would result in level-$k$ predictions that are far too low (at least, given a plausible calibration of the levels).

Note also that other apparently plausible $L_0$ specifications, such as thinking that one's opponents bid half their value, are excluded on the grounds that $L_0$ reasoning is supposed to be non-strategic (whereas bidding a fraction of one’s value quite clearly reflects an attempt to strategically balance one’s chance of winning with one’s payoff if one does win). Moreover, such specifications imply that level-$k$ bidders should submit a large mass of bids just above the maximum value bid by $L_0$ (for instance $L_1$ players in the all-pay auction should bid $50$ for all $v \geq 51$). We do not observe such masses in this experiment nor in the other auctions experiments of which we are aware.}

Finally, we note that the level-$k$ predictions as formulated in Section \ref{separating} are robust to a very wide range of level-0 specifications. For example, in the all-pay auction, level-$k$ bids will be characterised by Proposition \ref{prop2} for \textit{any} $L_0$ specification that satisfies
\begin{equation}\label{permissive}
    F(b) \leq  \bigg(\frac{b}{x+1}\bigg)^{\frac{1}{n-1}} 
\end{equation}
where $F(b)$ is the probability that the level-$0$ type bids $b$ or lower (i.e. the CDF of $b$). In particular, this means that Proposition ($\ref{prop4}$) will continue to hold provided that the CDF $F(b)$ that is bounded by the uniform CDF from above. Similarly, in the first-price auction, level-$1$ players will bid 0 (as in Proposition $\ref{prop4}$) for any level-$0$ specification that satisfies
\begin{equation}
    F(b) \leq \left( \frac{n-1}{n} \right)\left(\left(\frac{x}{x-b}\right)^{n-1}-1 \right)
\end{equation}
for all $b \in \mathbb{X}$.\footnote{This inequality comes from checking that a player with $v = x$ would rather bid $0$ than any $b \in \mathbb{X}$.} Moreover, while these inequalities are sufficient for our level-$k$ characterisation to hold, they are absolutely not necessary. For example, in the all-pay auction, the level-1 bidding strategy turns out to be a best response to the empirical bid distribution observed in the experiment, even though that distribution does not satisfy inequality (\ref{permissive}).

In light of the reasons above, we think that it is unlikely that one can find a level-0 specification that is simultaneously non-strategic, psychologically plausible and generates predictions that match the data in both this and other auction experiments. Ultimately though, this conclusion is somewhat more speculative than the central conclusion of this study: that the level-$k$ model as specified by \cite{crawford2007} cannot predict or rationalise the observed bidding behaviour.

\section{Concluding remarks} \label{Concluding_remarks}

In this paper, we have designed and implemented an experiment aimed at testing the level-$k$ model of auctions. Overall, the evidence would appear to strongly reject the model. When plausibly calibrated, the model produces bids that are at least an order of magnitude too low. Moreover, fitting the model without such a plausibility constraint results in estimated levels in the 30--35 range; and these levels in turn fail to bear any relation to levels inferred from the 11-20 game. In addition, changing the bid discretisation has little effect on the observed bids (in contrast to the model’s predictions); and subjects almost never cite iterated reasoning when asked to explain why they bid in the way they did. Thus, while one might be able to disagree with any of the pieces of evidence separately, it seems that the evidence collectively casts strong doubt on the model’s ability to predict or even rationalise the observed data. 

This finding might be seen as surprising given that the experimental environment was carefully chosen to be the level-$k$ model’s natural domain of applicability. Individual feedback was severely restricted and (following \cite{crawford2007}) the number of rounds was restricted to just two. However, the level-$k$ model fared poorly despite the initial play setting.

The findings also might be seen as surprising in light of evidence that the level-$k$ model very accurately captures facets of human behaviour in other settings, whether `beauty contests' \citep{nagel}, co-ordination games \citep{costa2009}, strategic communication \citep{crawford2003lying, cai2006} or zero sum betting \citep{brocas2014}. We thus close with some speculations as to why the model appears to do so badly at predicting bids in auctions despite its successes in predicting behaviour in other areas.

First, recall that the level-$k$ model requires a level-0 anchor, i.e. some sort of naive starting point that forms the basis of the iterated reasoning. In many settings, such an anchor is easy to provide. For example, one’s natural first thought (in a strategic communication setting) is that one’s opponents will tell the truth; and a natural first thought (in the context of guessing games) is that all possible guesses are equally likely. In contrast, it is not at all clear how the level-0 ought to be specified in the setting of auctions. \cite{crawford2007} select what appear to be the two best proposals -- that subjects either bid their valuation or randomise uniformly over the strategy space. However, since both of these proposals involve dominated bids, it is not obvious whether either represents a natural first thought for how one’s opponents will play. More to the point, it does not seem that there exists more plausible L0 specifications than those proposed by \cite{crawford2007} (see section \ref{robustness_checks} for discussion). Thus, the failure of the level-$k$ model in this context may well be down to the lack of a salient and psychologically plausible anchor.


Second, iterated reasoning is cognitively much more taxing in Bayesian games like auctions than the simpler settings to which the level-$k$ model is normally applied. To calculate one’s optimal action, one must combine one’s conjecture about one’s opponent’s strategy with their distribution over types to arrive at their distribution over actions. One must then solve an (often non-trivial) optimisation problem to find the optimal action given this distribution. Thus, even if a plausible level 0 anchor were available, it is unclear whether individuals would be able to correctly solve for the L1 strategy -- and even less clear whether they would be able to correctly identify the higher order strategies.

If this is right, it follows that the level-$k$ model should be applied only in situations which possess both an intuitively appealing `first thought’ along with easily computable best response dynamics. Rigorously testing whether these features are indeed necessary for the predictive success of the model -- along with identifying strategic settings which possess these apparently necessary features -- would seem to be an important task for future research.

\clearpage

\setlength{\bibhang}{0pt}
\bibliographystyle{apalike}
\bibliography{bibliography.bib}

\newpage

\begin{appendices}

\section{Proofs} \label{proofs}

\begin{proof}[Proof of Proposition 1]
Since our game has a symmetric and finite normal form, it must have a symmetric equilibrium (see \cite{harsanyi1967} and \cite{cheng} for details). Turning to uniqueness, we now generalise three lemmas from \cite{rasooly} to the case of mixed strategies. Let $\sigma$ denote an arbitrary symmetric equilibrium strategy; and for any $v \in \mathbb{X}$, let $s(v) \subseteq \mathbb{X}$ denote the set of bids that are submitted with positive probability at value $v$ (i.e. the support at $v$). 
\begin{lemma}
In any symmetric equilibrium $\sigma$, $s(0) = \{ 0 \}$.
\end{lemma}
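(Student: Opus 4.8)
The plan is to observe that, for a bidder whose value is $0$, submitting the bid $0$ is strictly dominant; the lemma then follows immediately from the fact that every bid played with positive probability in an equilibrium must be a best response. The key point is purely structural: in an all-pay auction a bidder pays their bid whether or not they win, and a value-$0$ bidder derives no utility from the object even when they do win it. Hence, whatever (possibly mixed) strategy $\sigma$ the opponents use, a value-$0$ bidder who submits $b \in \mathbb{X}$ obtains expected payoff
\[
0\cdot \mathbb{P}(\text{win}\mid b) - b \;=\; -b .
\]

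Since $-b$ is strictly decreasing in $b$ and $0 \in \mathbb{X}$ is always a feasible bid, the unique maximiser over $\mathbb{X}$ is $b = 0$, and this holds independently of $\sigma$. In a symmetric equilibrium $\sigma$, every $b \in s(0)$ is by definition submitted with positive probability at value $0$, so it must be a best response against $\sigma$ for the value-$0$ type; as the best response is unique and equal to $0$, we conclude $s(0) = \{0\}$.

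I do not expect any genuine obstacle here: one only needs to check that $0$ is a legal bid (it is, since $0 \in \mathbb{X}$) and that the value-$0$ payoff truly does not depend on the win probability, which is precisely where the all-pay structure enters. It is worth noting that the argument uses neither symmetry nor the equilibrium hypothesis beyond one-shot optimality, so the same reasoning yields $s(0) = \{0\}$ for any profile in which bidders best respond; this is the mixed-strategy counterpart of the corresponding pure-strategy lemma in \cite{rasooly}.
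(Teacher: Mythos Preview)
Your proof is correct and follows essentially the same route as the paper: both compute the value-$0$ bidder's payoff as $-b$ regardless of opponents' play, conclude that $b=0$ is the unique best response (equivalently, strictly dominant), and infer $s(0)=\{0\}$. Your added remark that neither symmetry nor the full equilibrium hypothesis is actually needed is a fair observation, but it does not constitute a different method.
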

\begin{proof}
Consider a player with $v = 0$. If they bid $0$, they get (regardless of their opponents' strategies) a payoff $\pi(v = 0, b = 0) = 0 \times \mathbb{P\text{(win}}|b=0) - 0 = 0$. On the other hand, if they submit any bid $b \geq 1$, they get (given any opponent strategy profile) $\pi(v = 0, b) = 0 \times \mathbb{P\text{(win}}|b) - b = - b < 0$. Thus, bidding $0$ strictly dominates bidding any $b \geq 1$; which means that $\mathbb{P}(b = 0|v = 0) = 1$ in any equilibrium (and in any symmetric equilibrium).\end{proof}

\begin{lemma}[Monotonicity]
Consider any symmetric equilibrium $\sigma$ and consider any two values $v, v' \in \mathbb{X}$ with $v > v'$. Then for any $b \in s(v)$ and $b' \in s(v')$, $b \geq b'$.
\end{lemma}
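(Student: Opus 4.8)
The plan is to prove the monotonicity lemma by a standard "exchange argument" (also called a revealed-preference or sum-of-inequalities argument) adapted to mixed strategies. Fix a symmetric equilibrium $\sigma$, values $v > v'$, and bids $b \in s(v)$, $b' \in s(v')$; suppose for contradiction that $b < b'$. The key observation is that, since opponents play the same strategy $\sigma$ regardless of the realized value of the player in question, the probability of winning with a given bid is a fixed number: write $W(c) \equiv \mathbb{P}(\text{win} \mid \text{bid } c)$, which is (weakly) increasing in $c$. The payoff to a type-$v$ player from bidding $c$ is then simply $\pi(v,c) = v\,W(c) - c$ in the all-pay auction.

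First I would write down the equilibrium optimality conditions: because $b \in s(v)$, bidding $b$ is optimal for type $v$, so $\pi(v,b) \geq \pi(v,b')$, i.e. $v W(b) - b \geq v W(b') - b'$. Similarly, because $b' \in s(v')$, we get $v' W(b') - b' \geq v' W(b) - b$. Adding these two inequalities, the bid terms $-b$ and $-b'$ cancel and we are left with $v W(b) + v' W(b') \geq v W(b') + v' W(b)$, which rearranges to $(v - v')\bigl(W(b) - W(b')\bigr) \geq 0$. Since $v - v' > 0$, this forces $W(b) \geq W(b')$. On the other hand, our contradiction hypothesis $b < b'$ together with monotonicity of $W$ gives $W(b) \leq W(b')$, so in fact $W(b) = W(b')$. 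Feeding $W(b) = W(b')$ back into the first optimality inequality $vW(b) - b \geq vW(b') - b'$ yields $-b \geq -b'$, i.e. $b \geq b'$, directly contradicting $b < b'$. Hence $b \geq b'$, as claimed.

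The one subtlety worth spelling out — and the place I expect is the only real obstacle — is justifying that $W(c)$ depends only on the bid $c$ and not on the player's value, and that it is weakly increasing. This is immediate here because the auction is symmetric and the player's own value does not enter the tie-breaking or allocation rule: the event "my bid $c$ is strictly greater than all opponents' bids" has a probability determined entirely by $c$ and the common strategy $\sigma$ (integrated against opponents' value distributions), and raising $c$ can only enlarge that event. I would state this as a one-line preliminary remark before running the exchange argument. Note also the argument uses nothing about the value distribution, consistent with the "holds for arbitrary distributions" claim; and it would go through verbatim for the first-price auction after replacing $\pi(v,c) = vW(c) - c$ with $\pi(v,c) = (v-c)W(c)$, since the same cancellation of the terms not involving $v$ occurs when the two optimality inequalities are summed.
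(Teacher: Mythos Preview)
Your proof is correct and follows essentially the same revealed-preference/exchange argument as the paper: write down the two optimality inequalities, combine them, and derive a contradiction from $v > v'$. The one notable difference is in how the contradiction is closed. The paper establishes \emph{strict} monotonicity of the win probability, $\mathbb{P}(\text{win}\mid b') > \mathbb{P}(\text{win}\mid b)$, by observing that $b$ lies in the equilibrium support (so all opponents bid $b$ with positive probability) and invoking the tie-breaking rule; it then divides through to obtain $v' \geq v$. You instead use only \emph{weak} monotonicity of $W$, conclude $W(b) = W(b')$, and feed this back into the first optimality inequality to get $b \geq b'$ directly. Your endgame is slightly cleaner and more robust: it does not depend on the particular tie-breaking convention or on $b$ being a support point, whereas the paper's version does.
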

\begin{proof}
For contradiction, let us suppose that $b < b'$. Since $b$ is submitted with positive probability when the value is $v$, bidding $b$ must be weakly optimal. In particular, it must be weakly better than bidding $b'$:
\begin{equation}\label{warp1}
\begin{split}
&\pi(v, b) \geq \pi(v, b') \\ \iff  & v \mathbb{P\text{(win}}|b) - b \geq v \mathbb{P\text{(win}}|b') - b' \\
\iff & b' - b \geq v \left(\mathbb{P\text{(win}}|b') - \mathbb{P\text{(win}}|b)\right) .
\end{split}
\end{equation}
Similarly, since $b'$ is submitted with positive probability when the value is $v'$, it must be weakly better than $b$:
\begin{equation}\label{warp2}
\begin{split}
&\pi(v', b') \geq \pi(v', b) \\ \iff  & v' \mathbb{P\text{(win}}|b') - b' \geq v' \mathbb{P\text{(win}}|b) - b \\
\iff & v'\left(\mathbb{P\text{(win}}|b') -  \mathbb{P\text{(win}}|b) \right) \geq b' - b.
\end{split}
\end{equation}
Inequalities (\ref{warp1}) and (\ref{warp2}) jointly imply that
\begin{equation}\label{crucial}
v'\left(\mathbb{P\text{(win}}|b') -  \mathbb{P\text{(win}}|b) \right) \geq v\left(\mathbb{P\text{(win}}|b') - \mathbb{P\text{(win}}|b)\right) .
\end{equation}
Since $b' > b$, and $b$ is a symmetric equilibrium bid, it must be that $\mathbb{P\text{(win}}|b') > \mathbb{P\text{(win}}|b)$ (for example, one wins with a bid of $b'$, but not with a bid of $b$, if all of one's opponents bid $b$ -- and this happens with positive probability). Since $\mathbb{P\text{(win}}|b') - \mathbb{P\text{(win}}|b) > 0$, inequality \ref{crucial} then yields $v' \geq v$, which contradicts our initial assumption that $v' < v$. This establishes that $b \geq b'$ as claimed.\end{proof}

\begin{lemma}[No gaps]
In any symmetric equilibrium $\sigma$, the bids that are submitted with positive probability are a set of consecutive integers.
\end{lemma}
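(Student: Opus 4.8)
The plan is a short proof by contradiction exploiting the all-pay payoff structure. Fix a symmetric equilibrium $\sigma$ and let $B := \bigcup_{v \in \mathbb{X}} s(v)$ be the set of all bids that are submitted with positive probability by some value. Suppose $B$ is not a set of consecutive integers. Since $B$ is finite, there is then an integer $b^*$ with $\min B < b^* < \max B$ and $b^* \notin B$. Set $a := \max\{b \in B : b < b^*\}$ and $c := \min\{b \in B : b > b^*\}$; these are well defined, $a, c \in B$, $c \geq a + 2$, and no bid strictly between $a$ and $c$ is ever submitted under $\sigma$.

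The key observation is that, since opponents playing $\sigma$ never submit a bid in $\{a+1, \dots, c-1\}$, the event ``every one of the $n-1$ opponents bids at most $c-1$'' coincides with the event ``every opponent bids at most $a$''. Hence $\mathbb{P}(\mathrm{win}\mid c) = \mathbb{P}(\mathrm{win}\mid a+1)$, both being the probability that all opponents bid at most $a$. (Note that $a+1$ is a feasible bid, since $0 \le a+1 \le c-1 \le x$.) Now pick any value $v \in \mathbb{X}$ with $c \in s(v)$, which exists because $c \in B$; bidding $c$ must then be a best response for $v$, so $\pi(v, c) \ge \pi(v, a+1)$. But in an all-pay auction the bidder pays the bid whether or not they win, so
\[
\pi(v, a+1) = v\,\mathbb{P}(\mathrm{win}\mid a+1) - (a+1) = v\,\mathbb{P}(\mathrm{win}\mid c) - (a+1) > v\,\mathbb{P}(\mathrm{win}\mid c) - c = \pi(v, c),
\]
using $a+1 < c$. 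This contradicts $c \in s(v)$, so $B$ has no gaps, i.e. it is a set of consecutive integers; combined with Lemma 1 this in fact gives $B = \{0, 1, \dots, \max B\}$.

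I do not expect a genuine obstacle here: the entire content is the first observation, that sliding a bid down from $c$ to $a+1$ across a gap leaves the winning probability unchanged, so the strictly positive saving on the all-pay cost makes it a profitable deviation. It is worth flagging that this argument invokes only Lemma 1 and the all-pay ``pay regardless'' structure, not the monotonicity lemma; the one place to be careful is the bookkeeping that defines the gap endpoints $a$ and $c$ and verifies that $a+1$ lies in $\mathbb{X}$.
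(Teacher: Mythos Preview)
Your proof is correct and follows essentially the same approach as the paper: find a gap between two bids $a$ and $c$ (the paper's $b_{low}$ and $b_{high}$) with nothing in between, then observe that deviating from $c$ to $a+1$ preserves the winning probability while strictly lowering the all-pay payment. Your write-up is more explicit about the bookkeeping (defining the gap endpoints via $\max$/$\min$ and checking $a+1 \in \mathbb{X}$), but the substantive idea is identical.
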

\begin{proof}
Suppose for contradiction that this were false. Then there must exist two bids $b_{high} \geq b_{low} + 2$ that are submitted with positive probability even though no bid in between $b_{high}$ and $b_{low}$ is submitted with positive probability. Now if types were to deviate from bidding $b_{high}$ to bidding $b_{low} + 1$, their payment would fall. However, their probability of winning would remain unchanged: both before and after the deviation, they win the auction if and only if all opponents bid $b_{low}$ or lower. So this deviation would be strictly profitable, implying that this could not have been an SE.\end{proof}

We now introduce the concept of jump form.

\begin{definition}
For every bid $i \in B^\sigma$, we define the $i$th jump $j_i$ as $$j_i = v_i + \mathbb{P}(b < i|v_i)$$
where $v_i$ is the minimum value $v \in \mathbb{X}$ such that $\mathbb{P}(b = i|v) > 0$. We refer to the vector of jumps $j = (j_0, j_1, ..., j_m)$ as a jump vector and say that it is increasing if $j_{i+1} > j_i$ for every $i \in \{0, 1, ..., m-1\}$.
\end{definition}

Under the lemmas outlined previously, this is an equivalent representation of behavioural strategies.

\begin{lemma}
There is a bijection between the set of gapless and monotone strategies that satisfy $s(0) = 0$ and the set of increasing jump vectors.
\end{lemma}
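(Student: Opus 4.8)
The plan is to exhibit the bijection by hand. In one direction, let $\Phi$ send a gapless, monotone strategy $\sigma$ with $s(0)=\{0\}$ to its jump vector $j(\sigma)=(j_0,\dots,j_m)$ as in the Definition, where $B^\sigma=\{0,1,\dots,m\}$ by the No-gaps lemma. In the other direction, let $\Psi$ send a jump vector $j=(j_0,\dots,j_m)$ to the ``staircase'' strategy $\sigma_j$ under which a type-$v$ bidder submits bid $i$ with probability equal to the length of $[v,v+1)\cap[j_i,j_{i+1})$, reading $j_{m+1}=+\infty$; equivalently, $\sigma_j$ is generated by drawing $U\sim\mathrm{Unif}[0,1)$ and bidding the unique $i$ with $j_i\le v+U<j_{i+1}$. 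The jump vectors actually in play are the increasing ones satisfying the normalisations $j_0=0$, $j_1\ge 1$, and $j_m<x+1$ (these hold automatically for $j(\sigma)$ because $v_0=0$, $\mathbb{P}(b<0|v_0)=0$, bid $1\notin s(0)$, $v_m\le x$, and $\mathbb{P}(b<m|v_m)<1$), and they are precisely what is needed for $\Psi$ to land in the target class. The remaining work is to check both maps are well-defined and mutually inverse.

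First I would verify $\Phi(\sigma)$ is genuinely increasing. Writing $v_i$ for the least value at which bid $i$ is submitted, Monotonicity forces $v_i\le v_{i+1}$ (if $v_{i+1}<v_i$, then $i\in s(v_i)$, $i+1\in s(v_{i+1})$ and $v_i>v_{i+1}$ would yield $i\ge i+1$). If $v_{i+1}>v_i$ then $j_{i+1}-j_i\ge (v_{i+1}-v_i)-\mathbb{P}(b<i|v_i)\ge 1-(1-\mathbb{P}(b=i|v_i))=\mathbb{P}(b=i|v_i)>0$ since bid $i$ has positive probability at $v_i$; if $v_{i+1}=v_i=:v$, then $i,i+1\in s(v)$ and $j_{i+1}-j_i=\mathbb{P}(b<i+1|v)-\mathbb{P}(b<i|v)=\mathbb{P}(b=i|v)>0$. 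Next I would check $\Psi(j)$ is gapless, monotone, and has $s(0)=\{0\}$: $s(0)=\{0\}$ because $[0,1)$ meets $[j_i,j_{i+1})$ only for $i=0$ (using $j_0=0$, $j_1\ge 1$); monotonicity because $b\in s(v)$ forces $j_b<v+1$ while $b'\in s(v')$ forces $v'<j_{b'+1}$, so $b<b'$ together with $v>v'$ would give the contradiction $v<j_{b+1}\le j_{b'}<v'+1$; and gaplessness because $[j_i,j_{i+1})$ meets $[\lfloor j_i\rfloor,\lfloor j_i\rfloor+1)$ for every $i\le m$, with $\lfloor j_i\rfloor\in\mathbb{X}$ since $0\le j_i\le j_m<x+1$.

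The core of the argument — and the only place I expect real work — is $\Psi\circ\Phi=\mathrm{id}$: that a gapless, monotone strategy with $s(0)=\{0\}$ is already of staircase form and hence recovered from its finite jump data. The key structural observation is that Monotonicity makes each tail function $v\mapsto\mathbb{P}_\sigma(b\ge i|v)$ all-or-nothing: if a bid $\ge i$ has positive probability at $v'$ then $\min s(v)\ge\max s(v')\ge i$ for every $v>v'$, so the function equals $0$ below some value, equals some $c_i\in(0,1]$ at a single value, and equals $1$ above; gaplessness and monotonicity identify that single value as $v_i$, and $j_i=v_i+1-c_i$ recovers the pair $(v_i,c_i)$, equivalently $v_i=\lfloor j_i\rfloor$ and $c_i=v_i+1-j_i$. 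Since $\mathbb{P}_\sigma(b=i|v)=\mathbb{P}_\sigma(b\ge i|v)-\mathbb{P}_\sigma(b\ge i+1|v)$, the whole strategy is pinned down by the jump vector, and a direct comparison shows it coincides with $\sigma_{j(\sigma)}$. Finally $\Phi\circ\Psi=\mathrm{id}$ is routine: for $\sigma_j$ one has $\mathbb{P}(b=i|v)>0$ iff $j_i-1<v<j_{i+1}$, so the least such integer is $v_i=\lfloor j_i\rfloor$, and $\mathbb{P}(b<i|v_i)=|[v_i,v_i+1)\cap[0,j_i)|=j_i-\lfloor j_i\rfloor$, whence the $i$-th jump of $\sigma_j$ is $\lfloor j_i\rfloor+(j_i-\lfloor j_i\rfloor)=j_i$. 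Combining the four checks gives the claimed bijection.
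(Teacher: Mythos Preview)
Your proof is correct and follows the same overall plan as the paper: show that gapless monotone strategies with $s(0)=\{0\}$ produce strictly increasing jump vectors (your two cases $v_{i+1}>v_i$ and $v_{i+1}=v_i$ mirror the paper's exactly), and then recover the strategy uniquely from its jump vector.

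Where you differ is in the execution of the reconstruction. The paper argues case by case---values below $v_1$, above $v_m$, strictly between consecutive $v_i$, and equal to some $v_i$---and for the last case sets up a triangular linear system in the bid probabilities. You instead give the inverse map $\Psi$ explicitly as the ``staircase'' strategy generated by $v+U$ with $U\sim\mathrm{Unif}[0,1)$, and reduce $\Psi\circ\Phi=\mathrm{id}$ to the single structural observation that each tail $v\mapsto\mathbb{P}_\sigma(b\ge i\mid v)$ is $0$/$c_i$/$1$-valued. This is cleaner and buys you something the paper glosses over: you actually check that $\Psi(j)$ lands in the target class (gapless, monotone, $s(0)=\{0\}$) and that $\Phi\circ\Psi=\mathrm{id}$, whereas the paper only argues uniqueness of the reconstruction and leaves well-definedness and surjectivity implicit. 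You also correctly flag the normalisations $j_0=0$, $j_1\ge 1$, $j_m<x+1$ that the abstract jump vectors must satisfy for the bijection to hold---a point the paper's statement elides.
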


\begin{proof}
To establish the first claim, take any gapless and monotone strategy that specifies $\mathbb{P}(b = 0|v = 0) = 1$ and convert it into a jump vector $j$. To show that $j$ is increasing, consider any two consecutive jump points $j_i, j_{i+1} \in j$. If the strategy is monotone, every value that submits a bid $b = i + 1$ must be weakly larger than the every value that submits a bid $b = i$. So in particular, $v_{i + 1} \geq v_i$. This yields two possibilities: either $v_{i+1} > v_i$ or $v_{i+1} = v_i$. In the first instance ($v_{i+1} > v_i$),
\begin{equation}
\begin{split}
    j_{i + 1} &\equiv v_{i + 1} + \mathbb{P}(b < i + 1|v_{i + 1}) \\
    &\geq v_{i} + 1 + \mathbb{P}(b < i + 1|v_{i + 1}) \\
    &\geq  v_{i} + 1 \\
    & > v_i + \mathbb{P}(b < i|v_i) \equiv j_i
\end{split}
\end{equation}
where the final inequality holds since $\mathbb{P}(b < i|v_i) < 1$ (which in turn holds since necessarily $\mathbb{P}(b = i|v_i) > 0$). So in that case, $j_{i+1} > j_{i}$. In the second instance ($v_{i+1} = v_i$), we have
\begin{equation}
\begin{split}
    j_{i + 1} &\equiv v_{i + 1} + \mathbb{P}(b < i + 1|v_{i + 1}) \\
    &= v_{i} + \mathbb{P}(b < i + 1|v_{i}) \\
    &= v_{i} + \mathbb{P}(b < i|v_{i}) + \mathbb{P}(b = i|v_{i}) \\
    & > v_i + \mathbb{P}(b < i|v_i) \equiv j_i
\end{split}
\end{equation}
where the inequality again holds since $\mathbb{P}(b = i|v_{i}) > 0$. Either way, then, we see that monotonicity implies that $j_{1 + 1}> j_i$. Since this true for all $i = 2, 3, ..., m$, this then implies that that vector of jumps is increasing. So every monotone behavioural strategy generates an increasing vector of jumps.

We now argue that any increasing vector of jumps $j = (j_1, ..., j_m)$ is consistent with a unique monotone and gapless behavioural strategy that satisfies $\mathbb{P}(b = 0|v = 0)$. First, use $j$ to compute the minimum values that submit each of the possible bids $v_i$ using the fact that $v_i = \floor{j_i}$ for all $i = 1, ..., m$. (Notice that, since $j_{i+1} > j_i$ for every $i$, $v_{i+1} = \floor{j_{i+1}} \geq v_i = \floor{j_i}$, i.e. the $v_i$ are weakly increasing.) We will begin by showing how this uniquely determines the behavioural strategy for all values other than the $v_i$.

First, consider all $v \in \mathbb{X}$ with $v < v_1$. Recall that $v_1$ is the minimum $v \in \mathbb{X}$ such that $\mathbb{P}(b  = 1|v) > 0$. So $\mathbb{P}(b = 1|v_1) > 0$; and $\mathbb{P}(b = 1|v) = 0$ for any value $v < v_1$. Since $\mathbb{P}(b = 1|v_1) > 0$, monotonicity implies that $\mathbb{P}(b > 1|v) = 0$ for any $v < v_1$. Since $\mathbb{P}(b = 1|v) = 0$ and $\mathbb{P}(b > 1|v) = 0$ for all $v < v_1$, $\mathbb{P}(b = 0|v) = 1$ for all $v < v_1$, i.e. all such values bid $0$ with probability $1$.

Next, consider now all $v \in \mathbb{X}$ with $v > v_m$ (where $m$ is the largest bid submitted). By definition, $v_m$ is the minimum value $v$ such that $\mathbb{P}(b = m|v) > 0$. So $\mathbb{P}(b = m|v_m) > 0$. By monotonicity, this means that $\mathbb{P}(b \geq m|v) = 1$ for all $v > v_m$. But $m$ is the largest bid submitted with positive probability. Hence, $\mathbb{P}(b = m|v) = 1$ for all $v > v_m$, i.e. all such values bid $m$ with probability $1$.

Finally, consider all remaining values $v$ such that $v \neq v_i$ for all $i = 1, ..., m$. For any such value, we can find a bid $i$ such that $v_i < v < v_{i + 1}$. Using the previous two arguments, one can show that $\mathbb{P}(b \geq i|v) = 1$ and $\mathbb{P}(b \leq i|v) = 1$. But then $\mathbb{P}(b = i|v) = 1$, i.e. all such values bid $i$ with probability $1$.

It remains to consider the values $v \in \mathbb{X}$ such that $v = v_i$ for some $i = 1, 2, ..., m$. Consider then any such value and let $B^v$ denote the set of bids $i$ such that $v_i = v$. Since the $v_i$ are weakly increasing (in the bid $i$), $B^v$ must be a set of consecutive integers (or a set containing a single integer). Label these integers $b', b' + 1, ..., b' + k$. For all such bids, $v$ is the minimum value such that $\mathbb{P}(b|v) > 0$. So all such bids must belong to the support of the strategy at value $v$.

We now argue that $\mathbb{P}(b \geq b' - 1|v) = 1$, i.e. bids strictly lower than $b' - 1$ cannot belong to the support. To see this, note that, since the $v_i$ are increasing, $v_{b'} \geq v_{b' - 1}$. However, $v_{b'} = v \neq v_{b' - 1}$: otherwise, the bid $b' - 1$ would belong to $B^v$. Hence,  $v_{b'} = v > v_{b' - 1}$. So the minimum value that bids $b' - 1$ is strictly smaller than $v$. By monotonicity, this means that the value $v$ must bid at least $b' - 1$ with probability $1$.

Next, we argue that $\mathbb{P}(b >  b' + k|v) = 0$. The argument is similar. Since the $v_i$ are increasing, $v_{b''} \geq v_{b' + k}$ for any higher bid $b'' >  b' + k$. However, $v_{b''} \neq v_{b' + k} = v$: otherwise, the bid $b'$ would be in $B^v$. So $v_{b''} > v_{b' + k}$. So the smallest value that bids $b''$ is strictly larger than $v$. As a result, $\mathbb{P}(b = b''|v) = 0$.

Given the previous arguments, we know that the support at $v$ must contain $B^v = \{b', b' + 1, ..., b' + k\}$. In addition, the only other bid that may belong to the support is $b' - 1$. So, for every bid $i \in B^v$, we can write
\begin{equation}
\begin{split}
j_i &= v_i + \mathbb{P}(b < i|v) = v + \sum_{j = b' - 1}^{i-1}\mathbb{P}(b = j|v) \\
\end{split}
\end{equation}
Thus, we have the system of equations
\begin{equation}
\begin{split}
j_{b'} - v &= \mathbb{P}(b = b' -  1|v) \\
j_{b' + 1} - v &= \mathbb{P}(b = b' -  1|v) + \mathbb{P}(b'|v) \\
. \\
. \\
j_{b' + k} - v &= \mathbb{P}(b = b' -  1|v) + \mathbb{P}(b = b'|v) + ... + \mathbb{P}(b = b' + k - 1)
\end{split}
\end{equation}
Given the vector $j$, the first equation uniquely determines $\mathbb{P}(b = b' -  1|v)$ (which may equal zero). Using the value obtained for $\mathbb{P}(b = b' -  1|v)$, the second equation uniquely determines $\mathbb{P}(b = b'' -  1|v)$. Continuing in this manner, the system of equations uniquely pins down $\mathbb{P}(b = i|v)$ for $i = b' - 1, b', ..., b' + k - 1$. Given the previous arguments, the only other bid in the support is $b = b' + k$ (the maximum bid). This is then also uniquely determined using
\begin{equation}
\mathbb{P}(b = b' + k|v) = 1 - \sum_{i = b'' - 1}^{b' - 1}\mathbb{P}(b = i|v)
\end{equation}
The argument above shows how a vector of jumps uniquely pins down a probability distribution over bids for any value $v \in \mathbb{X}$. Repeating the argument for every such value, we see that it is consistent with a unique behavioural strategy.\end{proof}

Having recast strategies in jump form, we can now define our algorithm:

\begin{definition}
Let $\hat{j} = (\hat{j_1}, ... \hat{j}_m)$ denote the output of the following algorithm:
\begin{enumerate}
    \item Impose $j_0$ = 0.
    \item Starting at $i = 1$, find the minimum $j_i \in (j_{i-1}, S]$ such that $\pi(v = \floor{j_i}, b = i) - \pi(v = \floor{j_i}, b = i - 1) \geq 0$.
    \item If there is no such $j_i$, the algorithm terminates.
    \item There there does exist such a $j_i$, repeat (from step 2) for $i + 1$.
\end{enumerate}
\end{definition}

We then observe the following.
\begin{lemma}
The vector $\hat{j}$ is the \textit{only} possible symmetric equilibrium of the all-pay auction.
\end{lemma}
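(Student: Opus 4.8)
The plan is to exploit the correspondence built up in the preceding lemmas: every symmetric equilibrium $\sigma$ induces, through the lemma that $s(0)=\{0\}$, the monotonicity lemma, the no-gaps lemma, and the bijection lemma, a unique increasing jump vector $j=(j_0,\dots,j_m)$, so it suffices to show $j=\hat j$. I would prove this by induction on the index $i$, the base case being $j_0=0=\hat j_0$, which is immediate from $s(0)=\{0\}$.

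For the inductive step, fix $i\ge 1$ and assume $j_\ell=\hat j_\ell$ for every $\ell<i$. The organising observation is that, once the earlier jumps are fixed, the payoff to bidding $i$ depends on $\sigma$ only through the location of the jump to bid $i$: if that jump sits at a point $t$, then $\mathbb{P}(\text{win}\mid b=i)=G_t(i-1)^{n-1}$, where $G_t(i-1)$ --- the chance an opponent bids at most $i-1$ --- equals the mass of opponents' values strictly below $\lfloor t\rfloor$ plus the fraction $t-\lfloor t\rfloor$ of the mass sitting at $\lfloor t\rfloor$; this is continuous in $t$, non-decreasing, and strictly increasing as $t$ sweeps across a value carrying positive probability. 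Meanwhile $\mathbb{P}(\text{win}\mid b=i-1)$ is pinned down by $j_{i-1}=\hat j_{i-1}$. Hence $g_i(t):=\pi(\lfloor t\rfloor,i)-\pi(\lfloor t\rfloor,i-1)$, computed with the jump to $i$ placed at $t$, is exactly the quantity tested in step~2 of the algorithm, and it is well defined given the past. Now distinguish two cases. If $\sigma$ places positive probability on bid $i$, let $j_i$ be its jump point; then $g_i(j_i)\ge 0$ since bidding $i$ is optimal for type $\lfloor j_i\rfloor$, and I claim $g_i(t)<0$ for every $t\in(j_{i-1},j_i)$. Indeed, if $g_i(t)\ge 0$ for some such $t$, then at the true equilibrium --- where the jump to $i$ lies at $j_i\ge t$, so $G(i-1)$ is at least $G_t(i-1)$ and bidding $i$ at least as attractive --- type $\lfloor t\rfloor$ weakly prefers $i$ to $i-1$; as that type in fact bids at most $i-1$, it must be indifferent, which forces the equilibrium $G(i-1)$ to equal $G_t(i-1)$, hence $t=j_i$ by strict monotonicity, a contradiction. (The two degenerate sub-cases --- $\lfloor t\rfloor=0$, where $g_i(t)=-1<0$ by $s(0)=\{0\}$, and $\lfloor t\rfloor=\lfloor j_i\rfloor$, which is handled via the indifference of type $\lfloor j_i\rfloor$ between bids $i-1$ and $i$ whenever that type mixes --- are dealt with separately.) Thus $j_i$ is the least point of $(j_{i-1},S]$ with $g_i\ge 0$, i.e.\ $j_i=\hat j_i$. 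If instead $\sigma$ places zero probability on bid $i$, its top bid is $i-1$ and every opponent bids at most $i-1$ with probability one; were the algorithm to return some $\hat j_i\in(j_{i-1},S]$ with $g_i(\hat j_i)\ge 0$, then the type $\lfloor\hat j_i\rfloor$ --- which bids at most $i-1$ in $\sigma$ but there faces a full unit mass of opponents below $i$, strictly more than the $G_{\hat j_i}(i-1)<1$ that enters $g_i(\hat j_i)$ --- would strictly prefer to deviate to bid $i$, contradicting equilibrium; so the algorithm halts at step $i$ as well. Conversely, if $\sigma$ uses bid $i$, evaluating $g_i$ at its jump point gives $g_i\ge 0$ there, so the algorithm cannot have halted at step $i$. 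Hence $\sigma$ and $\hat j$ use bids with exactly the same indices, which completes the induction.

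I expect the crux to be the ``$t<j_i$'' step, i.e.\ ruling out that the equilibrium's jump to bid $i$ sits strictly to the right of the algorithm's choice. That step must combine the monotonicity lemma, the fact that moving a jump point rightward only raises the winning probability of a bid of $i$, and the \emph{strict} monotonicity of $G_t(i-1)$ in $t$ --- which is why one wants the value distribution to charge the pivotal values with positive probability (a hypothesis I would state explicitly) --- together with a careful split according to whether the disputed type's value equals $\lfloor j_i\rfloor$ or lies strictly below it. The remaining ingredients --- continuity and the closed form for $G_t(i-1)$, and the right-continuity of $g_i$ that makes ``the least $t$ with $g_i(t)\ge 0$'' attained --- are routine and I would confine them to remarks.
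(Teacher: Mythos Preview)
Your proposal is correct and follows essentially the same route as the paper: pass to jump vectors via the preceding lemmas, then argue by induction on $i$ (equivalently, by considering the first index of disagreement) that the equilibrium's $j_i$ must equal the algorithm's $\hat j_i$, using optimality of bid $i$ for type $\lfloor j_i\rfloor$ to rule out $j_i<\hat j_i$ and a profitable deviation for type $\lfloor\hat j_i\rfloor$ to rule out $j_i>\hat j_i$.

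Two minor differences are worth recording. First, you explicitly treat termination --- matching the length of $j$ to that of $\hat j$ --- which the paper's proof leaves implicit; that is a genuine improvement. Second, in your ``crux'' step you pass from ``type $\lfloor t\rfloor$ weakly prefers $i$ to $i-1$ and bids at most $i-1$'' to ``type $\lfloor t\rfloor$ is indifferent between $i$ and $i-1$''; this inference requires that $i-1$ actually lies in the support of type $\lfloor t\rfloor$, which you do not state. It is true --- since $t\in(j_{i-1},j_i)$ forces $\lfloor j_{i-1}\rfloor\le\lfloor t\rfloor<\lfloor j_i\rfloor$ (in the non-degenerate case), and the monotone--gapless structure then puts $i-1$ in that type's support --- but you should make it explicit. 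The paper handles the same point differently: it first observes $\lfloor\hat j_i\rfloor\ge 1$ to obtain a \emph{strict} preference for $i$ over $i-1$ under $j$, and then verifies by direct computation that $\mathbb{P}^j(b=i-1\mid v=\lfloor\hat j_i\rfloor)>0$, yielding the contradiction without passing through indifference.
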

\begin{proof}
Suppose for contradiction that there were some symmetric equilibrium $j \neq \hat{j}$. If $j \neq \hat{j}$, there must be some jump at which they differ. Let $j_i$ denote the first such jump (i.e. $i$ is the smallest number such that $j_i \neq \hat{j_i})$. There are two possibilities: $j_i < \hat{j_i}$ or instead $j_i > \hat{j_i}$. We will argue that each leads to a contradiction.

To see why the first case is impossible, first define $\pi^{\hat{j}}(v, b)$ and $\pi^j(v, b)$ as the payoff of a player with a value $v$ who bids $b$ given that their opponents all bid according to $\hat{j}$ and $j$ respectively. Now recall that, by construction, $\hat{j_i}$ is the \textit{minimum} jump such that
\begin{equation}
\pi^{\hat{j}}(v = \floor{\hat{j_i}}, b = i) \geq \pi^{\hat{j}}(v = \floor{\hat{j_i}}, b = i-1)
\end{equation}
Since $j_i < \hat{j_i}$, this means that
\begin{equation}
\pi^{j}(v = \floor{j_i}, b = i) < \pi^{j}(v = \floor{j_i}, b = i-1)
\end{equation}
But then, given that one's opponents bid according to $j$, one should never bid $b = i$ with a value $v = \floor{j_i}$. That is, $\mathbb{P}(b = i|v = \floor{j_i}) = 0$. But then $v = \floor{j_i}$ cannot be the smallest value that bids $i$ (contradiction).

To see why the second case is impossible, recall again that every $\hat{j_i}$ satisfies
\begin{equation}
\begin{split}
\pi^{\hat{j}}(v = \floor{\hat{j_i}}, b = i) &\geq \pi^{\hat{j}}(v = \floor{\hat{j_i}}, b = i-1) \iff \\
\floor{\hat{j_i}}\mathbb{P}^{\hat{j}}(\text{win}|b = i) - 1 &\geq \floor{\hat{j}_{i}}\mathbb{P}^{\hat{j}}(\text{win}|b = i-1)
\end{split}
\end{equation}
As a preliminary, note that since the RHS is non-negative, $\floor{\hat{j_i}}\mathbb{P}^{\hat{j}}(\text{win}|b = i) - 1 \geq 0$ and so $\floor{\hat{j_i}} \geq 1$ for every $i$. Let us record this fact.

Turning to the main argument, fix the value at $\hat{j_i}$ but consider now changing opponent bidding from $\hat{j}$ to $j$. If $j_i > \hat{j_i}$, then $\mathbb{P}^j(\text{win}|b = i) > \mathbb{P}^{\hat{j}}(\text{win}|b = i)$. However, since $j$ and $\hat{j}$ agree for all previous jumps, $\mathbb{P}^j(\text{win}|b = i-1) = \mathbb{P}^{\hat{j}}(\text{win}|b = i-1)$. Since $\hat{\floor{j_i}} \geq 1$ (established earlier), this all means that
\begin{equation}
\begin{split}
\floor{\hat{j_i}}\mathbb{P}^{j}(\text{win}|b = i) - 1 &> \floor{\hat{j}_{i}}\mathbb{P}^{j}(\text{win}|b = i-1) \iff \\
\pi^{j}(v = \floor{\hat{j_i}}, b = i) &> \pi^{j}(v = \floor{\hat{j_i}}, b = i-1)
\end{split}
\end{equation}
If $j$ is an SE, this implies that $\mathbb{P}^j(b = i-1|v = \floor{\hat{j_i}}) = 0$. However, this is impossible: since $j_i > \hat{j}_i$, $\mathbb{P}^j(b = i-1|v = \floor{\hat{j_i}}) = \min\{j_i, \floor{\hat{j}_i} + 1\} - \max \{\hat{j}_{i-1}, \floor{\hat{j}_i}\} > \hat{j}_i - \max \{\hat{j}_{i-1}, \floor{\hat{j}_i}\} = \mathbb{P}^{\hat{j}}(b = i-1|v = \floor{\hat{j_i}}) \geq 0$ and so $\mathbb{P}^j(b = i-1|v = \floor{\hat{j_i}}) > 0$.\end{proof}

By the argument above, there is no possible symmetric equilibrium except for $\hat{j}$. At we noted earlier, however, our game must possess some symmetric equilibrium. From this, it follows that it has exactly one symmetric equilibrium (namely, $\hat{j}$). This concludes the proof.\end{proof}

\begin{proof}[Proof of Proposition 2]
See main text.\end{proof}

\begin{proof}[Proof of Proposition 3]
A proof of this result can be easily reconstructed from the proof of Proposition \ref{prop1} (and is available from the author upon request).\end{proof}

\begin{proof}[Proof of Proposition 4]
We argue by induction. First, we show that the desired pattern of bidding holds when $k = 1$. Next, we show that if it holds for any level $k - 1  \in \mathbb{K}$, then it also holds (provided that $x$ is sufficiently large) for any level $k \in \mathbb{K}$.

Starting with the base case ($k = 1$), note that level-$1$ bidders choose a bid $b \in \mathbb{X}$ to maximise \vspace{-1em}
\begin{equation}
\pi^1 (v, b) = (v - b)\left(\frac{n-1}{n} + \frac{b}{n(x+1)}\right)^{n-1}
\end{equation}
(inserting $p = 1/n$ into \ref{formula}). Considering this function's continuous extension, observe that
\begin{equation}\label{log_derivative}
\frac{\partial \ln [\pi^1 (v, b)]}{\partial b} = \frac{(n-1)(v - (x+1)) - nb}{(v-b)(b + (n-1)(x+1))}
\end{equation}
Since are searching for the optimal bid, we can just consider the range $b \leq v$. In this range, the denominator of (\ref{log_derivative}) is non-negative. Moreover, since $v \leq x$ for all $v \in \mathbb{X}$, the numerator is negative. Thus, $\partial  \ln[\pi^1 (v, b)] / \partial b \leq 0$ and so $\partial  \pi^1 (v, b) / \partial b \leq 0$. Since this is true of the function's continuous extension, it follows that $\pi(v, 0) \geq \pi(v, b)$ for every value $v \in \mathbb{X}$ and every positive integer bid $b \in \mathbb{X}$. That is, bidding zero is optimal (as claimed).

Next, we show that, if the desired pattern of bidding holds for level $k - 1  \in \mathbb{K}$, then it also holds (provided that $x$ is sufficiently large) for level $k \in \mathbb{K}$. Suppose then that all players of level $k-1$ bid as proposed. Then they bid $0$ when $v \leq v^*(k-1)$, i.e. at $\floor{v^*(k-1) + 1} $ distinct values; and otherwise, they bid $k - 2$. Hence, for a level-$k$ player,
\begin{equation}
\mathbb{P}^k(\text{win}|b) = 
\begin{cases}
1 & \text{if} \hspace{0.5em} b > k - 2 \\
\left(1-p + p\left(\frac{\floor{v^*(k-1) + 1}}{x+1}\right) \right)^{n-1}& \text{if} \hspace{0.5em} b \in \{1, ..., k - 2\} \\
(1-p)^{n-1} & \text{if} \hspace{0.5em} b = 0 \\
\end{cases}
\end{equation}
Given this, it is immediate that $b^* \in \{0, 1, k - 1\}$, i.e. there are at most three optimal bids. Moreover, $\pi^k(v, b = 0) - \pi^k(v, b = 1) \rightarrow v(1-p)^{n-1} - (v-1)(1-p)^{n-1} < 0$ as $x \rightarrow \infty$, and so $b^* \in \{0, k - 1\}$ if $x$ is sufficiently large. Finally, observe that the latter is optimal iff
\begin{equation}
\begin{split}
\pi(v, b = k-1) > \pi(v, b = 0) &\iff v - (k-1) > v(1-p)^{n-1} \\
&\iff v > \frac{k-1}{1 - (1-p)^{n-1}}
\end{split}
\end{equation}
i.e. iff $v > v^*(k)$ (as claimed).\end{proof}
\newpage

\section{Illustrating the algorithm} \label{algorithm}

In this section, we illustrate the algorithm using two examples.  In both, values are uniform so (in any symmetric strategy profile)
\begin{equation}
\begin{split}
\mathbb{P}(b < i) &= \mathbb{P}(v < \floor{j_i}) + \mathbb{P}(v = \floor{j_i})(j_i - \floor{j_i}) \\
&= \frac{\floor{j_i}}{S} + \frac{1}{S}(j_i - \floor{j_i}) \\
&= \frac{j_i}{S}
\end{split}
\end{equation}
where $S$ is the number of possible valuations.

\textbf{Example 1.} Consider an all-pay auction with uniform values and $n = 2$. To find the first jump point, we look for the minimum $j_1 \in [2, S]$ such that
\begin{equation}
\pi(v = \floor{j_1}, b = 1) - \pi(v = \floor{j_1}, b = 0) \geq 0 
\end{equation}
Clearly, $\pi(v = \floor{j_1}, b = 0) = 0$ since $\mathbb{P\text{(win}}|b=0) = 0$. In addition, $\mathbb{P\text{(win}}|b=1) = \mathbb{P}(b = 0) = j_1/S$ and so $\pi(v = \floor{j_1}, b = 1) = \floor{j_1}j_1/S - 1$. So we look for the minimum $j_1$ such that
\begin{equation}
    \frac{\floor{j_1}j_1}{S} - 1 \geq 0
\end{equation}
Of course, the solution will depend on $S$. If $S = 100$, $j_1 = 10$ (and similarly $j_1 = \sqrt S$ for any square $S$). If $S = 99$, the solution remains at $j_1 = 10$. However, if $S = 101$, the solution becomes $j_1 = 10.1$, implying randomisation. (One can prove that, as $S \rightarrow \infty$, the proportion of solutions that are integer solutions converges to $1/2$.) Let us suppose that $S = 101$ so that the first jump is $j_1 = 10.1$.

To find the next jump, we look for the smallest $j_2$ that satisfies
\begin{equation}
\begin{split}
&\pi(v = \floor{j_2}, b = 2) - \pi(v = \floor{j_2}, b = 1) \geq 0 \\
\iff & \floor{j_2} \left(\mathbb{P\text{(win}}|b=2) - \mathbb{P\text{(win}}|b=1)\right) \geq 1 \\
\iff & \floor{j_2} \left(\mathbb{P}(b \leq 1) - \mathbb{P}(b = 0)\right) \geq 1 \\
\iff & \floor{j_2} \left(\frac{j_2}{101} - \frac{10.1}{101}\right) \geq 1 \\
\end{split}
\end{equation}
One can check that the solution is $j_2 = 16.4125$, so the equilibrium vector of jumps starts with $(0, 10.1, 16.4125, ...)$. Returning to behavioural strategies, this means that all values $v \in \{0, 1, ..., 9 \}$ bid $0$ (with probability $1$), value $v = 10$ randomises between $b = 0$ and $b = 1$ with probabilities $0.1$ and $0.9$, values $v \in \{11, ..., 15\}$ bid $1$ (with probability $1$), and finally the value $v = 16$ bids $b = 1$ with probability $0.4125$. (Since we haven't computed $j_3$, we cannot technically determine the probability with which value $v = 16$ bids $v = 2$; but in fact $j_3 \geq 17$ so this probability is $1 - 0.4125$.)

\textbf{Example 2.} Consider now the first price auction with cancelled bids. To find the first jump $j_1$, we look for the smallest $j_1 \in (0, S]$ such that
\begin{equation}\label{example}
\begin{split}
&\pi(v = \floor{j_1}, b = 1) \geq \pi(v = \floor{j_1}, b = 0) \\
\iff & (\floor{j_1} - 1)\mathbb{P\text{(win}}|b=1) \geq \floor{j_1}\mathbb{P\text{(win}}|b=0) \\
\iff & (\floor{j_1} - 1)(1-p + p\frac{j_1}{S})^{n-1}p \geq \floor{j_1}(1-p)^{n-1}p \\
\iff & (\floor{j_1} - 1)(1-p + p\frac{j_1}{S})^{n-1} \geq \floor{j_1}(1-p)^{n-1} \\
\end{split}
\end{equation}
For instance, suppose that $p = 1/2$, $n = 2$ and $S = 101$. Then (\ref{example}) reduces to 
\begin{equation}
\floor{j_1}j_1 - 101 - j_1 \geq 0
\end{equation}
which is almost quadratic in $j_1$. One can check that this inequality is satisfied by $j_1 = 11$: then $\floor{j_1}j_1 - 101 - j_1 = 11^2 - 101 - 11 = 9 > 0$. On the other hand, it won't hold if $j_1 < 11$: for then $\floor{j_1}j_1 - 101 - j_1 \leq 10 \times 11 - 101 - 11 = -2 < 0$. So the minimum $j_1$ that satisfies the inequality is $11$; this is the first jump. This means that all values smaller than $11$ must bid zero (with probability $1$); but the value $v = 11$ never bids zero with any probability.

\newpage

\section{Optimising $p$} \label{optimising}

To find the probability $p^* \in [0, 1]$ that maximises $d(p)$, recall that $p^* > 0$ (from Proposition \ref{prop4}). Moreover, it is obvious that $p^* < 1$: if $p = 1$, then $d(p) = 0$ (whereas $d(p) > 0$ is plainly possible). Hence, $p^* \in (0, 1)$, which means that the optimal $p$ must satisfy the first-order condition
\begin{equation}\label{foc1}
    \frac{\partial}{\partial p} \int_0^x \left| \beta(v) - \beta^{1}(v) \right| dv = 0.
\end{equation}
One can check that $\beta(v) \geq \beta^{1}(v)$ for all $v \in [0, x]$. So (\ref{foc1}) is equivalent to
\begin{equation}\label{foc2}
    \frac{\partial}{\partial p} \int_0^x  \beta(v) dv = \frac{\partial}{\partial p} \int_0^x  \beta^1(v) dv.
\end{equation}
When $n = 2$, standard calculations reveal that
\begin{align}\label{two_bidders}
\frac{\partial}{\partial p} \int_0^x  \beta(v) dv &= \frac{\partial}{\partial p} \int_0^x \frac{p v^2}{2 (p v - p + 1)}  dv \nonumber \\
&= \frac{\partial}{\partial p} \frac{x^2 \left(p (3 p-2) - 2 (p-1)^2 \ln (1-p)\right)}{4 p^2} \nonumber \\
&= \frac{x^2 ((2-p) p - 2 (p-1) \ln (1-p))}{2 p^3}
\end{align}
Similarly, when $n \geq 3$, we find that
\begin{align}\label{three_bidders}
\frac{\partial}{\partial p} \int_0^x  \beta(v) dv
&= \frac{\partial}{\partial p} \int_0^x \left(\frac{n-1}{n}\right)v - \frac{x(1 - p)}{np}\left[ 1 - \left(\frac{1 - p}{1 - p + p(v/x)}\right)^{n-1} \right]  dv \nonumber \\
&= \frac{\partial}{\partial p}  \frac{x^2 \left(2+n p ((n-1) p-2)\right)-2 (1-p)^n}{2 (n-2) n p^2} \nonumber \\
&= \frac{x^2 \left(p \left(n (1-p)^n-2 (1-p)^n+n+2\right)+2
   \left((1-p)^n-1\right) -n p^2 \right)}{(n-2) n (p-1) p^3}
\end{align}
Turning to the level-$1$ bidding function, we find that
\begin{align}\label{level1}
\frac{\partial}{\partial p} \int_0^x  \beta^1(v) dv &= \frac{\partial}{\partial p} \int_{v^*}^x  \left( \frac{n - 1}{n} \right)v - \left(\frac{1 - p}{p}\right)\frac{x}{n} \hspace{0.3em} dv  \nonumber \\
&= \frac{\partial}{\partial p} \frac{x^2 (n p-1)^2}{2 (n-1) n p^2} \nonumber \\
&= \frac{x^2 (n p-1)}{(n-1) n p^3}
\end{align}
where $v^* = (1-p)x/(n-1)p$.

To find the optimal probability when $n = 2$, we now substitute (\ref{two_bidders}) and (\ref{level1}) into the first-order condition (\ref{foc2}), obtaining
\begin{equation}
\frac{(2-p) p - 2 (p-1) \ln (1-p)}{2 p^3} = \frac{n p-1}{(n-1) n p^3}
\end{equation}
Similarly, when $n \geq 3$, we substitute (\ref{three_bidders}) and (\ref{level1}) and into the first-order condition, yielding
\begin{equation}
\frac{p \left(n (1-p)^n-2 (1-p)^n+n+2\right)+2
   \left((1-p)^n-1\right) -n p^2 }{(n-2) n (p-1) p^3} =  \frac{n p-1}{(n-1) n p^3}
\end{equation}
Either way, we obtain an equation that can be solved numerically to reveal candidates for $p^*$. It is then a straightforward matter to check which is of these candidates generate a larger distance. Since our problem must have a global maximiser (by the Weierstrass theorem), and the global maximiser must satisfy the first-order condition, the $p$ satisfying the first-order condition that generates the largest distance must be the global maximiser.

\begin{table}[h!]
\centering
\caption{Optimal cancellation probabilities (rounded)}
\label{probabilities}
\begin{tabular}{llllllll}
\hline
$n$   & 2     & 3     & 4     & 5     & 6     & 7     & 8     \\ \hline
$p^*$ & 0.536 & 0.343 & 0.256 & 0.204 & 0.170 & 0.145 & 0.127 \\ \hline
\end{tabular}
\end{table}

Table \ref{probabilities} outlines for the solutions for $n \in \{1, ..., 8\}$. As can be seen, the solution is generally quite close to $1/n$ — so the bound identified in Proposition \ref{prop4} comes close to identifying the exact solution. Note also that, as might be expected, the optimal probability $p$ does not depend on the scale parameter $x$.

\newpage

\section{Tables and figures} \label{tables}

\begin{figure}[H]
    \centering
    \caption{The first-price auction when bids are multiples of 5}
    \vspace{-0.2em}
    \includegraphics[width=14cm]{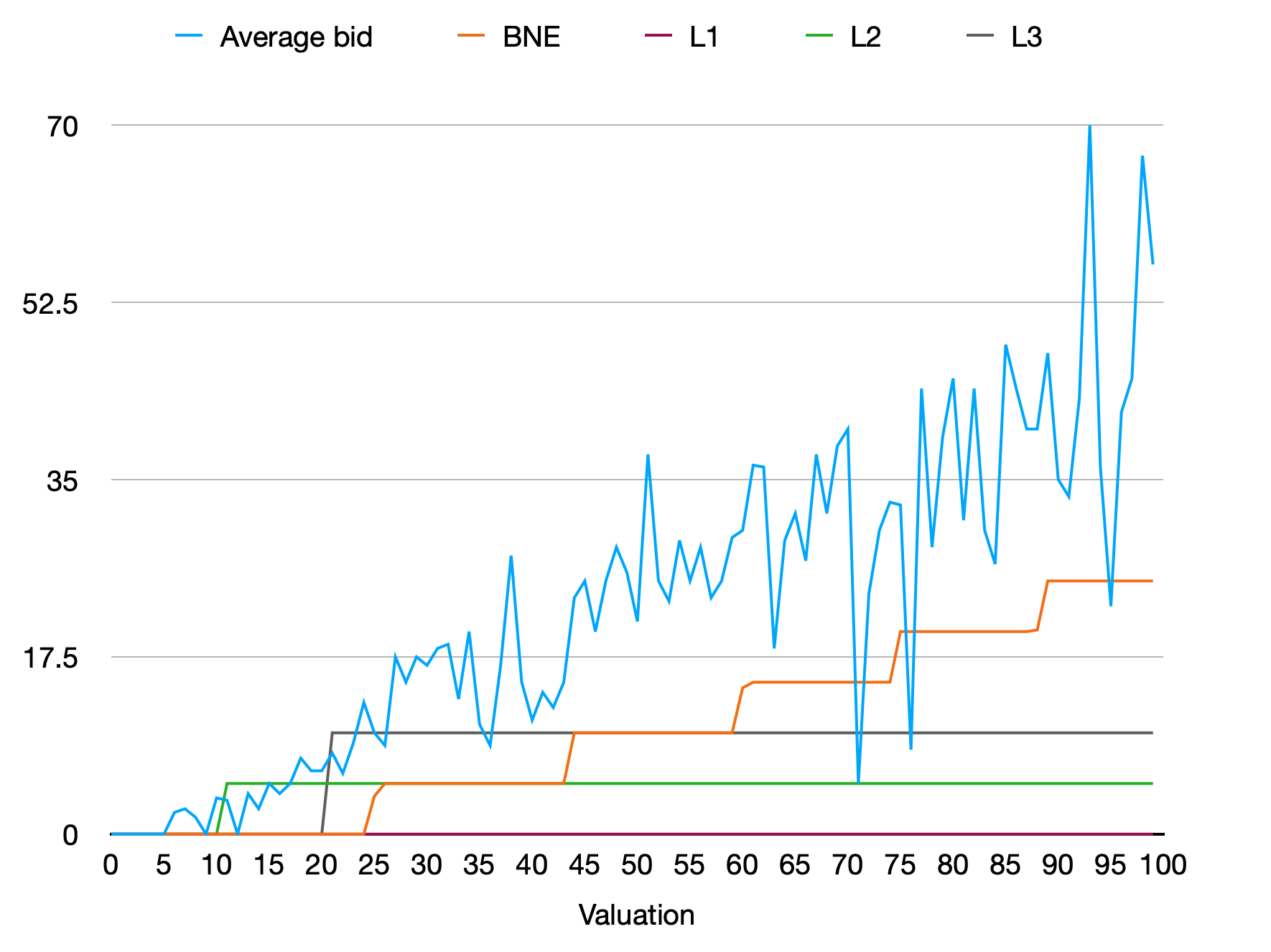}
    \label{t2fp}
\end{figure}

\begin{figure}[H]
    \centering
    \caption{The all-pay auction when bids are multiples of 5}
    \vspace{-0.2em}
    \includegraphics[width=14cm]{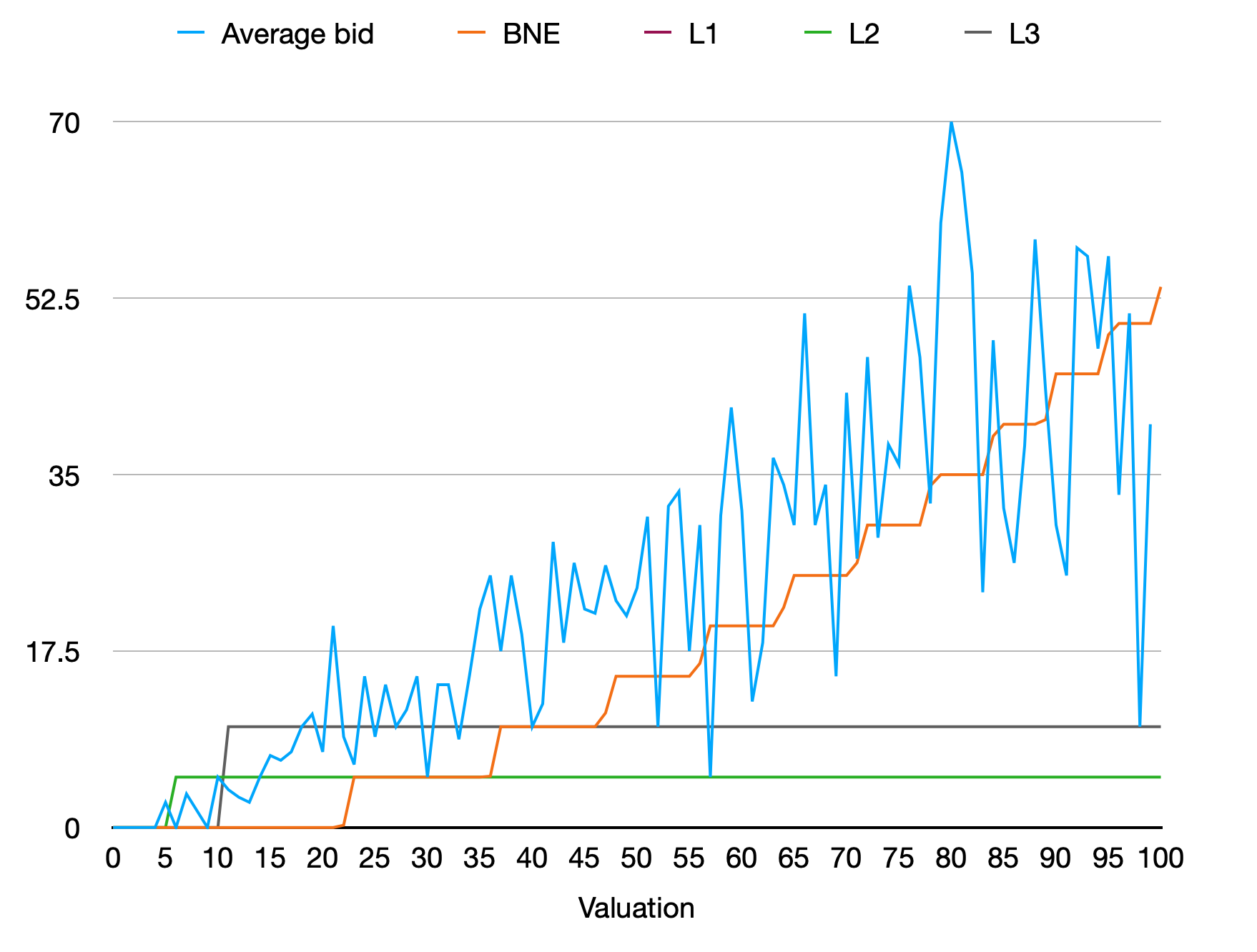}
    \label{t2ap}
\end{figure}

\begin{center}
\begin{table}[H]
\caption{Balance Table}
\label{balance}
\begin{threeparttable}
\begin{tabular}{lccc}
\hline
                 & \multicolumn{1}{l}{Integer bids} & \multicolumn{1}{l}{Multiples of five} & $p$-value \\ \hline
Mean age         & 31.4                             & 32.0                                  & 0.29      \\
Share male       & 0.54                             & 0.43                                  & 0.33      \\
Share maths      & 0.39                             & 0.38                                  & 0.95      \\
Share humanities & 0.24                             & 0.14                                  & 0.36      \\
Share social-sciences     & 0.30                             & 0.29                                  & 0.93      \\ \hline
\end{tabular}
\begin{tablenotes}
\footnotesize
\item \hspace{-0.2em}\textit{Notes}: This table displays how demographics (age, sex and subject) vary between the two treatment groups. The column ‘$p$-value’ reports $t$-tests of the hypothesis that the relevant variable is on average equal across the groups.
\end{tablenotes}
\end{threeparttable}
\end{table}
\end{center}

\begin{table}[H]
\begin{threeparttable}
\caption{Average bids}
\label{average_bids}
\begin{tabular}{lcccccc}
\hline
              & \multicolumn{3}{c}{First-price} & \multicolumn{3}{c}{All-pay}        \\ \hline
              & T1     & T2     & $p$-value   & T1  & T2  & $p$-value            \\ \cline{2-7} 
Average bid   & 23.6    & 23.0    & 0.85        & 22.2 & 25.4 & 0.12                 \\
Average value & 50.9    & 49.6    & 0.27        & 49.4 & 50.8 & 0.89                 \\
Ratio         & 0.46    & 0.46    &             & 0.45 & 0.50 & \multicolumn{1}{l}{} \\ \hline
\end{tabular}
\begin{tablenotes}
\footnotesize
\item \hspace{-0.2em}\textit{Notes}: This table reports the average bids across two treatments and auction structures. The column ‘$p$-value’ reports $t$-tests of the hypothesis that the relevant variable is on average equal across the groups.
\end{tablenotes}
\end{threeparttable}
\end{table}

\begin{table}[H]
\begin{threeparttable}
\caption{Comparing equilibrium and level-$k$}
\label{horse_race}
\begin{tabular}{llcccc}
\hline
      & \textit{} & T1FP    & T1AP    & T2FP    & T2AP    \\ \hline
BNE   & LL        & -4407.9 & -4332.7 & -1024.7 & -1042.2 \\
      & BIC       & 8820.5  & 8670.2  & 2053.1  & 2088.1  \\
L1    & LL        & -4530.4 & -4501.2 & -1088.8 & -1128.6 \\
      & BIC       & 9065.4  & 9007.1  & 2181.3  & 2260.9  \\
L1-L2 & LL        & -4530.4 & -4501.2 & -1085   & -1126.9 \\
      & BIC       & 9070.1  & 9011.7  & 2177.4  & 2261.2  \\
L1-L3 & LL        & -4530.1 & -4501.2 & -1071.1 & -1120.7 \\
      & BIC       & 9074.3  & 9016.4  & 2153.4  & 2252.6  \\ \hline
\end{tabular}
\begin{tablenotes}
\footnotesize
\item \hspace{-0.2em}\textit{Notes}: This table reports the log-likelihoods and associated Bayes-Information Criteria values for four different structural models. In the first model, equilibrium is the only type. The subsequent models are populated by level-1 types, level-1 and 2 types, and level-1, 2 and 3 types respectively.
\end{tablenotes}
\end{threeparttable}
\end{table}

\begin{table}[H]
\begin{threeparttable}
\caption{The hybrid model}
\label{hybrid}
\begin{tabular}{lllllllll}
\hline
           & \multicolumn{2}{c}{T1FP} & \multicolumn{2}{c}{T1AP} & \multicolumn{2}{c}{T2FP} & \multicolumn{2}{c}{T2AP} \\ \hline
\textit{}  & Subject     & Round      & Subject     & Round      & Subject     & Round      & Subject     & Round      \\
$p_0$      & 0.758       & 0.760      & 0.815       & 0.757      & 0.905       & 0.858      & 0.741       & 0.788      \\
           & (0.059)     & (0.046)    & (0.053)     & (0.06)     & (0.066)     & (0.054)    & (0.149)     & (0.083)    \\
$p_1$      & 0.036       & 0.046      & 0.000       & 0.010      & 0.095       & 0.044      & 0.048       & 0.071      \\
           & (0.025)     & (0.02)     & (0.0)       & (0.01)     & (0.066)     & (0.034)    & (0.048)     & (0.04)     \\
$p_2$      & 0.133       & 0.027      & 0.019       & 0.055      & 0.000       & 0.000      & 0.048       & 0.024      \\
           & (0.051)     & (0.016)    & (0.019)     & (0.022)    & (0.0)       & (0.0)      & (0.048)     & (0.024)    \\
$p_3$      & 0.073       & 0.167      & 0.167       & 0.178      & 0.000       & 0.098      & 0.164       & 0.117      \\
           & (0.041)     & (0.041)    & (0.051)     & (0.057)    & (0.0)       & (0.048)    & (0.141)     & (0.073)    \\
$\sigma_0$ & 27.3        & 27.3       & 19.9        & 20.2       & 23.2        & 23.7       & 21.7        & 21.3       \\
           & (1.9)       & (1.8)      & (1.6)       & (1.5)      & (3.0)       & (2.4)      & (2.8)       & (2.0)      \\
$\sigma_1$ & 3.8         & 1.8        & -           & 2.3        & 5.4         & 6.3        & 2.7         & 2.4        \\
           & (2.0)       & (0.3)      & -           & (0.4)      & (2.2)       & (0.7)      & (0.0)       & (0.4)      \\
$\sigma_2$ & 9.2         & 4.2        & 1.1         & 1.0        & -           & -          & 2.7         & 1.0        \\
           & (0.6)       & (0.5)      & (0.0)       & (0.0)      & -           & -          & (0.0)       & (0.0)      \\
$\sigma_3$ & 13.3        & 10.5       & 5.3         & 7.9        & -           & 4.1        & 14.5        & 6.5        \\
           & (1.1)       & (1.2)      & (0.6)       & (2.1)      & -           & (0.9)      & (7.0)       & (4.1)      \\ \hline
LL         & -4241.5     & -4217.1    & -4043.8     & -4004.1    & -987.9      & -982.6     & -962.9      & -940.9     \\ \hline
\end{tabular}
\begin{tablenotes}
\footnotesize
\item \hspace{-0.2em}\textit{Notes}: The `subject’ columns report analyses in which subjects’ levels are fixed; the `round’ columns report analyses in which levels are allowed to vary between rounds. The symbols $p_i$ and $\sigma_i$ respectively denote the probability and noise parameter of type $i \in \mathbb{K}$ (equilibrium is indexed by $i = 0$). LL abbreviates `log-likelihood’ and jack-knife standard errors are in parentheses. When $p_i = 0$, the associated parameter $\sigma_i$ is unidentified and therefore omitted from the table.
\end{tablenotes}
\end{threeparttable}
\end{table}

\begin{figure}[H]
    \centering
    \caption{Comparing estimated levels}
    \vspace{-0.2em}
    \includegraphics[width=14cm]{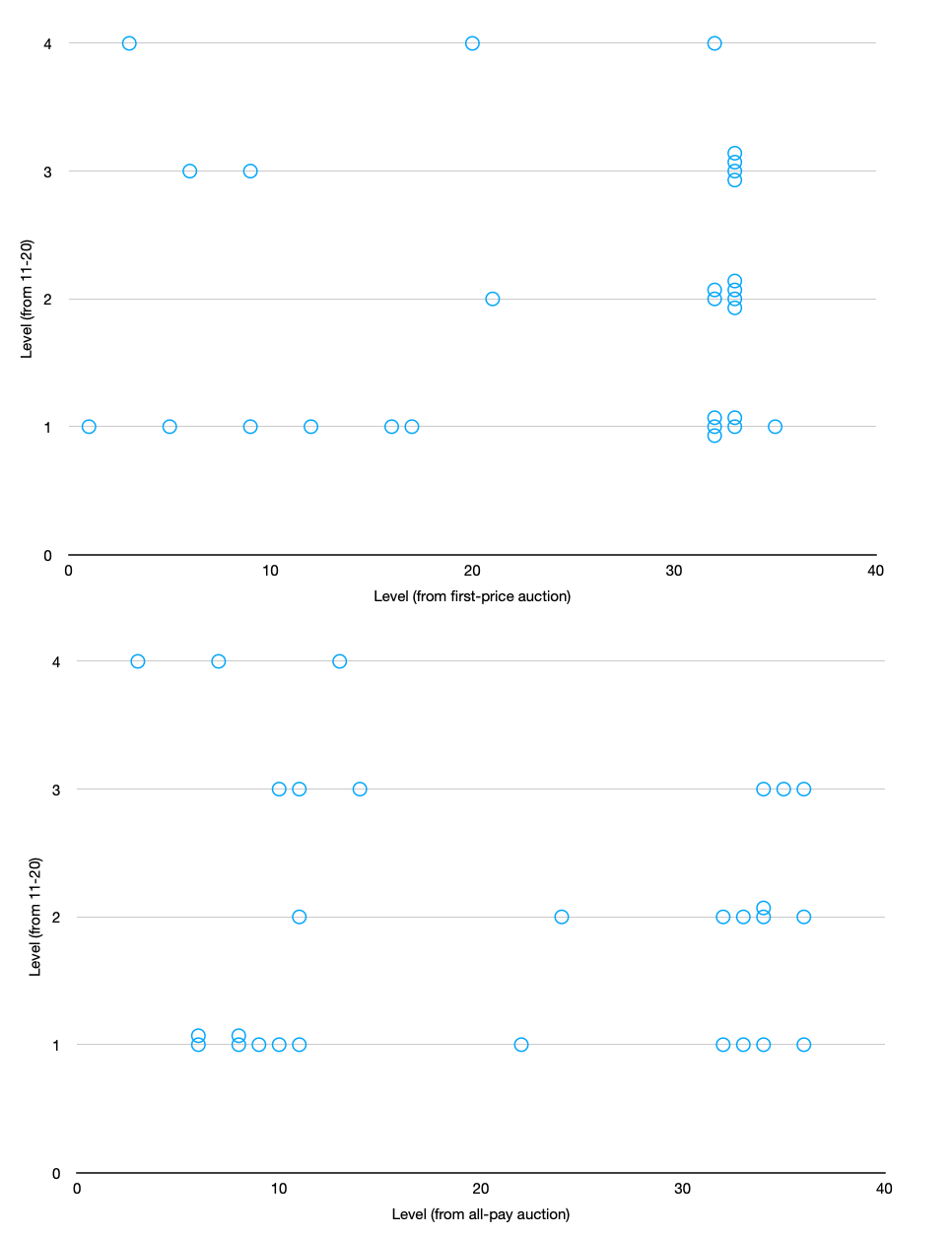}
    \label{correlate}
    \vspace{0.3em}
    \begin{minipage}{12.4cm}%
    \footnotesize \textit{Notes}: The first panel plots levels as estimated from the first-price auction against levels as inferred from the 11-20 game. The estimated correlation between these variables is 0 (95\% confidence interval: -0.37 to 0.37). The second panel plots levels as estimated from the all-pay auction against levels as inferred from the 11-20 game. The estimated correlation between these variables is -0.09 (95\% confidence interval: -0.44 to 0.30).
    \end{minipage}%
\end{figure}

\begin{table}[H]
\begin{threeparttable}
\caption{Robustness checks}
\label{robustness}
\begin{tabular}{clcccc}
\hline
\multicolumn{1}{l}{}                          &           & T1FP & T1AP & T2FP & T2AP \\ \hline
\multirow{2}{*}{\textit{Risk aversion}}       & BNE       & 19.4 & -    & 17.9 & -    \\
                                              & Level-$k$ & 26.9 & -    & 22.7 & -    \\ \hline
\multirow{2}{*}{\textit{Cognitive hierarchy}} & BNE       & 21.7 & 18.2 & 19.7 & 18.8 \\
                                              & Level-$k$ & 31.3 & 23.8 & 26.6 & 29.6 \\ \hline
\multirow{2}{*}{\textit{Dominated bids}}      & BNE       & 22.6 & 19.4 & 19.5 & 19.0 \\
                                              & Level-$k$ & 30.8 & 29.9 & 22.7 & 26.9 \\ \hline
\multirow{2}{*}{\textit{Tie-breaking}}        & BNE       & 21.7 & 18.2 & 19.7 & 18.8 \\
                                              & Level-$k$ & 30.3 & 29.5 & 23.4 & 26.6 \\ \hline
\multirow{2}{*}{\textit{Dropping round 2}}    & BNE       & 18.4 & 18.2 & 18.3 & 17.4 \\
                                              & Level-$k$ & 26.5 & 29.0 & 22.2 & 24.0 \\ \hline
\end{tabular}
\begin{tablenotes}
\footnotesize
\item \hspace{-0.2em}\textit{Notes}: This table reports the root-mean-square prediction errors of the relevant theories following a series of robustness checks. In every case, level-$k$ predictions are obtained by assigning each datapoint the level from the $1-3$ range that minimises the model's prediction error (a procedure that will tend to overstate the model's predictive performance).
\end{tablenotes}
\end{threeparttable}
\end{table}

\begin{figure}[h!]
    \centering
    \caption{Risk aversion (multiples of 5 treatment)}
    \vspace{-0.2em}
    \includegraphics[width=14cm]{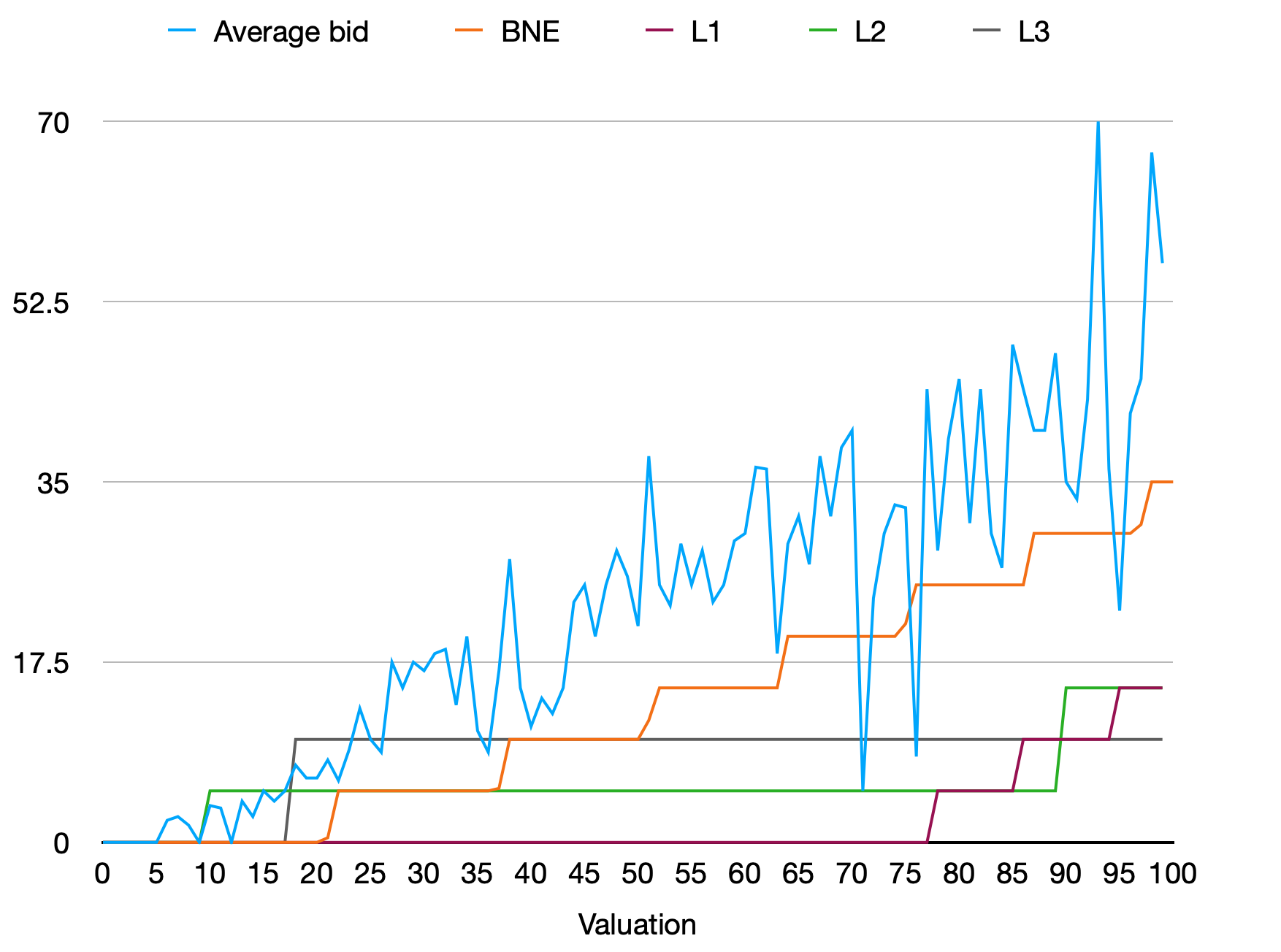}
    \label{risk_fives}
\end{figure}

\end{appendices}

\end{document}